\documentclass[journal]{IEEEtran}
% * <kjw8216@gmail.com> 2017-12-04T07:16:04.024Z:
%
% ^.
\usepackage{blindtext}
\usepackage{balance}		
\usepackage{cite}
\usepackage[pdftex]{graphicx}
\usepackage[cmex10]{amsmath}
\usepackage{amssymb}
\usepackage{siunitx}
\usepackage[tight,footnotesize]{subfigure}
\usepackage{url}
\usepackage{mathrsfs}
\usepackage{amsthm}

\usepackage{multirow}
\usepackage{amsmath}
\usepackage{cleveref}
\DeclareMathOperator*{\argmin}{arg\,min}
\usepackage{amsmath,amssymb,amsfonts}
\usepackage{amsthm}
\usepackage{siunitx}
\usepackage{algorithmic}
\usepackage{graphicx}
\usepackage{textcomp}
\usepackage{mathtools}
\usepackage{multirow}
\usepackage{mathrsfs}
\usepackage{cleveref}
\usepackage[tight,footnotesize]{subfigure}
\newtheorem{theorem}{Theorem}
\newtheorem{lem}{Lemma}

\newtheorem{corollary}{Corollary}

\allowdisplaybreaks

% *** Do not adjust lengths that control margins, column widths, etc. ***
% *** Do not use packages that alter fonts (such as pslatex).         ***
% There should be no need to do such things with IEEEtran.cls V1.6 and later.
% (Unless specifically asked to do so by the journal or conference you plan
% to submit to, of course. )

%%%%%%%%%%%%%%%%%%%%%%%%%%%%%%%%%%%%%%%%%%%%%%%%%%%%%%%%%
% COLORED WRITING and TODO NOTES
%%%%%%%%%%%%%%%%%%%%%%%%%%%%%%%%%%%%%%%%%%%%%%%%%%%%%%%%%
\usepackage{xcolor}
\definecolor{morange}{rgb}{0.8,0.2,0}
\definecolor{mblue}{rgb}{0,0.3,1.0}
\definecolor{mgreen}{rgb}{0.2,0.4,0}

%%%%%%%%%

%%%%%%%%%%%%%%%%%%%%%%%%%%%%%%%%%%%%%%%%%%%%%%%%%%%%%%%%%

% correct bad hyphenatin here

\begin{document}
\bstctlcite{IEEEexample:BSTcontrol}

%%%%%%%%%%%%%%%%%%%%%%%%%%%%%%%%%%%%%%%%%%%%%%%%%%%%%%%%%
\title{Analog Self-Interference Cancellation with Practical RF Components for Full-Duplex Radios}
%%%%%%%%%%%%%%%%%%%%%%%%%%%%%%%%%%%%%%%%%%%%%%%%%%%%%%%%%

\author{Jong Woo Kwak,~\IEEEmembership{Student Member,~IEEE,}
%        Min Soo Sim, In-Woong Kang, Jaedon Park,\\~Chan-Byoung Chae,~\IEEEmembership{Fellow,~IEEE}, and~Robert W. Heath, Jr,~\IEEEmembership{Fellow,~IEEE}%
        Min Soo Sim, In-Woong Kang, \\Jaedon Park, Kai-Kit Wong,~\IEEEmembership{Fellow,~IEEE}, and~Chan-Byoung Chae,~\IEEEmembership{Fellow,~IEEE}%
\thanks{J. W. Kwak and C.-B. Chae are with School of Integrated Technology, Yonsei University, Seoul, Korea. M. S. Sim was with School of Integrated Technology, Yonsei University, Seoul, Korea. He is now with Qualcomm Technologies, Inc., San Diego, CA, USA (e-mail: kjw8216@yonsei.ac.kr; cbchae@yonsei.ac.kr; msim@qti.qualcomm.com).}
\thanks{I.-W. Kang and J. Park are with Agency for Defense Development, Daejeon, Korea (e-mail: iwkang@add.re.kr; jaedon2@add.re.kr).}
\thanks{K. Wong is with University College London, U.K (e-mail: kai-kit.wong@ucl.ac.uk).}

% <-this % stops a space
%\thanks{J. Doe and J. Doe are with Anonymous University.}% <-this % stops a space
}

% The paper headers
\markboth{}%
{}
% The only time the second header will appear is for the odd numbered pages
% after the title page when using the twoside option.

% If you want to put a publisher's ID mark on the page you can do it like
% this:
%\IEEEpubid{0000--0000/00\$00.00~\copyright~2007 IEEE}
% Remember, if you use this you must call \IEEEpubidadjcol in the second
% column for its text to clear the IEEEpubid mark.

% make the title area
\maketitle

\begin{abstract}
One of the main obstacles in full-duplex radios is analog-to-digital converter (ADC) saturation on a receiver due to the strong self-interference (SI). 
To solve this issue, researchers have proposed two different types of analog self-interference cancellation (SIC) methods---i) passive suppression and ii) regeneration-and-subtraction of SI. 
For the latter case, the tunable RF component, such as a multi-tap circuit, reproduces and subtracts the SI. The resolutions of such RF components constitute the key factor of the analog SIC. Indeed, they are directly related to how well the SI is imitated.
Another major issue in analog SIC is the inaccurate estimation of the SI channel due to the nonlinear distortions, which mainly come from the power amplifier (PA). 
In this paper, we derive a closed-form expression for the SIC performance of the multi-tap circuit; we consider how the RF components must overcome such practical impairments as digitally-controlled attenuators, phase shifters, and PA. 
For a realistic performance analysis, we exploit the measured PA characteristics and carry out a 3D ray-tracing-based, system-level throughput analysis. 
Our results confirm that the non-idealities of the RF components significantly affect the analog SIC performance.
We believe our study provides insight into the design of the practical full-duplex system.
\end{abstract}

% Note that keywords are not normally used for peerreview papers.
\begin{IEEEkeywords}
Full-duplex, Self-interference cancellation
\end{IEEEkeywords}

% For peer review papers, you can put extra information on the cover
% page as needed:
% \ifCLASSOPTIONpeerreview
% \begin{center} \bfseries ED-SICS Category: 3-BBND \end{center}
% \fi
%
% For peerreview papers, this IEEEtran command inserts a page break and
% creates the second title. It will be ignored for other modes.
\IEEEpeerreviewmaketitle

%%%%%%%%%%%%%%%%%%%%%%%%%%%%%%%%%%%%%%%%%%%%%%%%%%%%%%%%%%%%%%%%%%%%%%%%%%%%%%%%%%
%%%%%%%%%%%%%%%%%%%%%%%%%%%%%%%%%%%%%%%%%%%%%%%%%%%%%%%%%%%%%%%%%%%%%%%%%%%%%%%%%%
%%%%%%%%%%%%%%%%%%%%%%%%%%%%%%%%%%%%%%%%%%%%%%%%%%%%%%%%%%%%%%%%%%%%%%%%%%%%%%%%%%
%%%%%%%%%%%%%%%%%%%%%%%%%%%%%%%%%%%%%%%%%%%%%%%%%%%%%%%%%%%%%%%%%%%%%%%%%%%%%%%%%%
%%%%%%%%%%%%%%%%%%%%%%%%%%%%%%%%%%%%%%%%%%%%%%%%%%%%%%%%%%%%%%%%%%%%%%%%%%%%%%%%%%
\section{Introduction}

\IEEEPARstart{T}{o} keep up with the ever-increasing demand of higher data rates and spectral efficiency, for 5G wireless communications, full-duplex communications has emerged as a highly promising technique~\cite{dgsurvey, ibsurvey,hongmag,kwak_mole,Opp_Kim,fairmac,IAB_Suk,guest}.  
%\IEEEPARstart{A}{s} demand continues to grow for 
%increasing spectral efficiency and data rates, researchers have, for 5G wireless communications, discovered a highly promising technology---full-duplex~\cite{dgsurvey, ibsurvey,hongmag}.
The most significant hurdle in full-duplex radio is the fact that an analog-to-digital converter (ADC) has to convert, simultaneously, a signal-of-interest (SoI) and the self-interference (SI). Because the power of SI typically exceeds that of SoI by 100 dB~\cite{passive}, successful conversion is possible only if the SI is suppressed within the dynamic range of the ADC by an analog SI cancellation (SIC).

For sub-6 GHz full-duplex radios, the passive SI suppressors (e.g., circulator, polarized antenna, etc.) typically do not provide a sufficient SIC capability to prevent the saturation of the receiver ADC.
Therefore, in full-duplex systems engineers widely adopt additional analog SIC methods~\cite{allanalog, passive,practical_realtime,kwakasilomar,Alms1,Alms2}. 
Such methods utilize the additional RF components (e.g., a multi-tap circuit or auxiliary transmit chain) to regenerate the canceling signal using the known transmitted signal and the SI channel knowledge~\cite{theoretical,springer,Sachin2013Full,lincoln}.
The multi-tap circuit consists of tunable RF components such as time delays, attenuators, and phase shifters. The value of each component is adjusted to match the circuit's output to the negative of the SI. Because the circuit takes a power amplifier (PA) output as an input, the analog SIC via multi-tap circuit is capable of eliminating the transmitter noise. 

The SIC capability of the multi-tap circuit depends on the 1) accurate estimation and 2) replication of the received SI. 
In~\cite{lincoln}, the authors derived an optimal attenuators/phase shifters values that minimized the residual SI with the fixed time-delay values. 
Note that the attenuators and phase shifters are assumed to have infinite resolutions. The resolutions of digitally-controlled attenuators and phase shifters are often neglected in the tuning algorithm, as it makes the problem NP-hard. The authors in~\cite{Sachin2013Full,MIMO,DLS} also pointed this out, and eased the problem through linear relaxation and the gradient-descent-based search.  
In~\cite{theoretical}, the authors theoretically analyzed the power of the residual SI with the proposed tuning algorithm in~\cite{lincoln}, considering the imperfect SI channel estimation and the imperfect time delay alignment between the SI channel and the circuit. 
However, this work also assumes the ideal attenuators and phase shifters. 
Through simulations, researchers in~\cite{postech,sparse,atten_quantization} investigated the impact of attenuator resolution on SIC performance. Their results showed that an attenuator's low resolution severely degraded the SIC performance. 
In~\cite{postech}, the authors observed that the SIC performance is rapidly improved as they increased the resolutions of phase shifters and attenuators, however, it is then saturated when the resolutions reached to the certain level (i.e., 0.05 dB stepsize for the attenuator and 10-bit phase shifter). According to the simulation results, the authors in~\cite{postech} set the resolutions of phase shifters and attenuators that provide sufficient analog SIC to prevent the ADC saturation.  
The effect of attenuator bias, response time and the phase noise introduced by attenuator is investigated in~\cite{atten_quantization,phasenoise,atten_bias_response}.

The other crucial issue affecting the analog SIC via multi-tap circuit performance is the transmitter nonlinearity~\cite{AUX_nonlin,nonlinDSIC, Alms1, Alms2}.
The authors in~\cite{Alms0,Alms1} proposed the purely analog hardware that cancels the SI and analyzed the impact of I/Q imbalance. In this paper, we focus on the nonlinearity caused by a PA. 
Although the SI channel estimation can be done offline at the initial, it has to be updated according to how the SI channel changes. In~\cite{MIMO}, the authors implemented WiFi-based MIMO full-duplex radios and observed that the incorrect SI channel estimation due to the PA nonlinearity limits SIC performance to 30 dB. They proposed an iterative tuning algorithm that resolves this problem and achieve 60 dB cancellation. 

Taking account into the practical issues mentioned above, this paper provides a realistic performance analysis of the multi-tap circuit.
The main contributions of this paper are as follows: 
\begin{itemize}
	\item A closed-form expression of the multi-tap circuit's SIC performance is derived while considering the resolution of the digital attenuators, phase shifters, and the practical PA characteristics. 
	To the best of our knowledge, this is the first work that theoretically investigates the analog SIC performance with such RF non-idealities. The derived formula is verified through link-level SIC simulations. Our results show that adaptive processing for the tuning of the multi-tap circuit could be necessary when the resolutions of the phase shifters and attenuators are low. Without extensive numerical evaluations, we can investigate the performance of the multi-tap circuit with various parameters by utilizing the derived expression. Future work would analyze the trade-off between the level of cancellation and the complexity cost of the multi-tap circuit.
	\item  
	We carry system-level throughput analysis considering UE-to-UE interference. We model the 3D building structure and adapt it to the ray-tracing tools with the measured radiation pattern of the dual-polarized antenna~\cite{prototyping,dualpole}. We compare the system-level throughputs in the various SIC scenarios. 
\end{itemize}
 
The rest of this paper is organized as follows. In Section~2, we introduce our system model and the preliminaries on the analog SIC via multi-tap circuit. In Section~3 and Section~4, we derive the closed-form expressions for the residual SI considering the limited resolution of the attenuators and the phase shifters and the nonlinearity of PA. In Section~5, we present the link-level SIC simulation results and the system-level throughput analysis. Finally, in Section~6, we present our conclusions.

\section{{\fontsize{11}{14}\selectfont System Model and\\  Multi-Tap Circuit-Based SIC}}
\label{Sec.model}

\begin{figure}[t]
	\begin{center}
		\resizebox{2.8in}{!}{\includegraphics{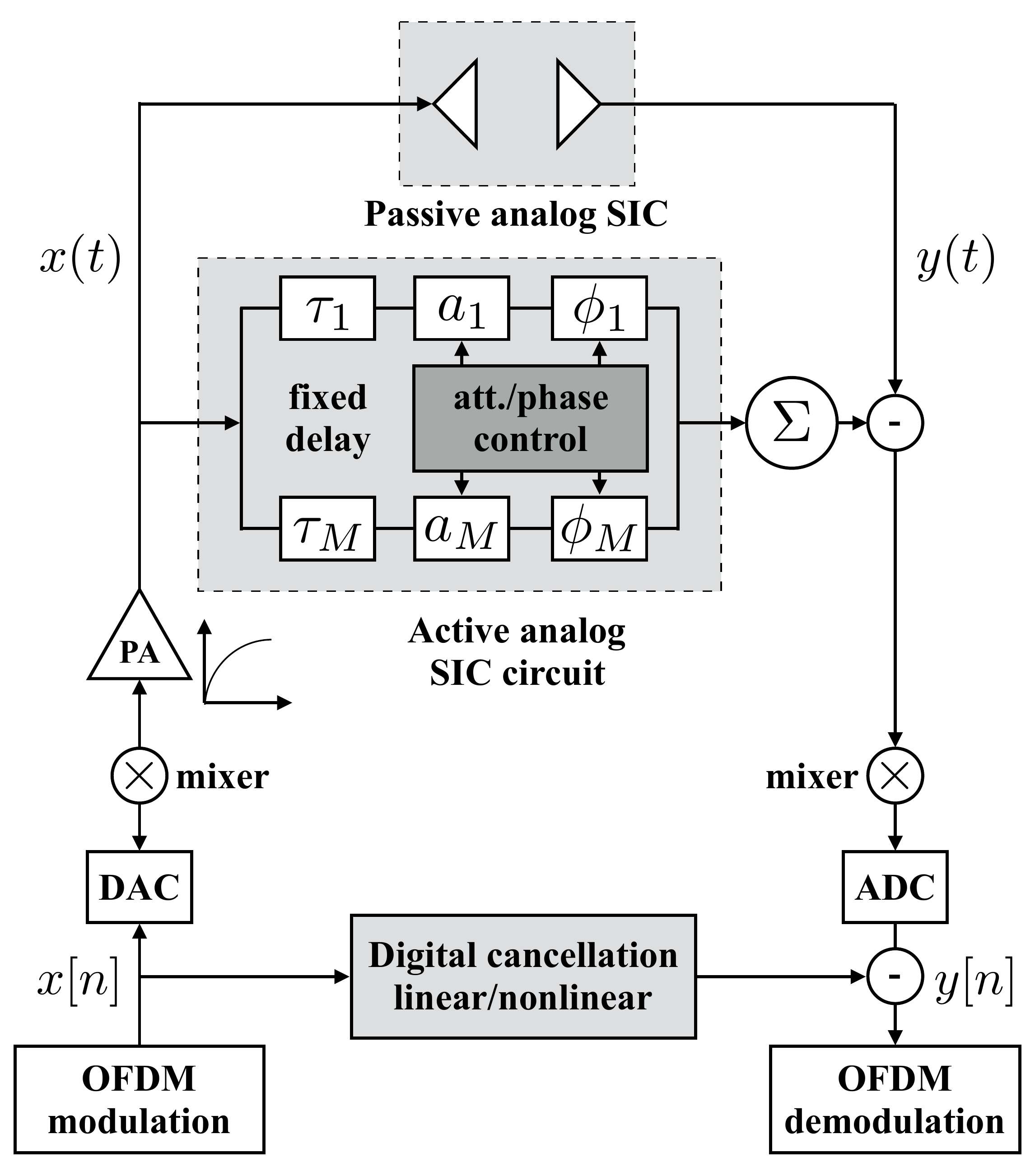}}
		\caption{A block diagram of the self-interference cancellation simulator.}	
		\label{block_diagram}
	\end{center}
\end{figure}

Fig.~\ref{block_diagram} depicts a full-duplex system equipped with a analog SIC via multi-tap circuit. 
We consider the following three-step SIC scenario: 1) the passive suppression at the propagation stage 2) the analog SIC via multi-tap circuit, and 3) the digital cancellation of the residual SI. The SI channel herein refers to the response of leakage from a transmitter to the receiver (i.e., line-of-sight component) and the reflected signals (non-line-of-sight components). 
Let $h_\text{SI}(t)$ be a baseband equivalent impulse response of the SI channel, then,
\begin{equation}
\label{eq.h_SI}
h_\text{SI}(t) = \sum_{i=0}^{L-1}c_{i}\delta(t-iT),
\end{equation} 
where $c_i$ is the gain of the $i$-th tap, $L$ is the number of taps, and $T$ is the baseband sampling period. For the first tap coefficient, we adopt a Rician channel model~\cite{rician} and a Raleigh channel model for the other taps. The Rician factor is 20dB. We consider an OFDM system with $K$ subcarriers, where the frequency-domain channel gain for $k$-th subcarrier is represented as
\begin{equation} 
\label{h_si_freq}
{{H_{\text{SI}}}}[k]=\sum_{\ell=0}^{L-1}c_ie^{-j2\pi k\ell /K}.
\end{equation}
Let $a_i$, $\phi_i$, and $\tau_i$ be the attenuator, phase shifter, and time delay value of the $i$-th tap, respectively, and $\bold{{H}^{i}}$ is the corresponding frequency response of the $i$-th tap. The time delays of the multi-tap circuit herein are assumed to be pre-determined\cite{lincoln,Sachin2013Full}. 
Note that still under investigation is the optimal setting of the circuit's time-delay configuration~\cite{numoftap}. For the derivation, we consider a general time-delay setting.
We assume the delay line will not interfere with other delay lines.
The frequency response of the multi-tap circuit is modeled as a summation of each delay line's frequency response,
\begin{equation}
{H_\text{cir}}[k]=\sum_{i=1}^{M}{H^{i}}[k],
\end{equation}
where
\begin{equation}
\label{eq_circuit}
{H^{i}}[k]= a_ie^{-j\phi_i}e^{-jk\Delta_w\tau_i}. 
\end{equation}
Our goal is to match the frequency response of the multi-tap circuit to the negative of the estimated SI channel, $\bold{\bold{\hat{H}_{\text{SI}}}}$, 
\begin{equation}
\label{h_noise}
{{\hat{H}_{\text{SI}}}}[k] ={{H_{\text{SI}}}}[k]+{N}[k],
\end{equation}
where ${N}[k]$ is the circular symmetric complex Gaussian (CSCG) noise with zero mean and variance $\sigma^2$.
To this end, we adjust the values of variable attenuators and phase shifters by solving the following optimization problem.
\begin{align}
\label{eq.h_circuit}
&\Big\{{\tilde{a},\tilde{\phi}}\Big\}=\underset{\{a_i, \phi_i\} }{\text{argmin}}   \sum_{k=0}^{K-1}\left({{\hat{H}_{\text{SI}}}}[k]-{H_{\text{cir}}}[k])\right)^2. 
\end{align}
Let $\bold{W}$ denotes the coefficients vector, which contains the attenuator and phase shifter values.
\begin{align}
\label{eq.wo}
&\bold{W} = \Big[{a}_1e^{-j{\phi}_1},{a}_2e^{-j{\phi}_2},
\cdots, {a}_Me^{-j{\phi}_M} \Big]^T, 
\end{align} 
where $(\cdot)^T$ is the matrix transposition operation.
With the coefficients $\bold{W}$, the corresponding frequency response of the multi-tap circuit can be represented as
%\begin{align}
%\label{eq.h_circuit2}
%H_\text{cir}&=\underbrace{\begin{bmatrix}
%	e^{-j\Delta_w\tau_1} & e^{-j2\Delta_w\tau_1} & \cdots & e^{-jK\Delta_w\tau_1}\\
%	e^{-j\Delta_w\tau_2} & e^{-j2\Delta_w\tau_2} & \cdots & e^{-jK\Delta_w\tau_2}\\
%	\cdots & \cdots & \ddots& \cdots\\
%	e^{-j\Delta_w\tau_M} & e^{-j2\Delta_w\tau_M} & \cdots & e^{-jK\Delta_w\tau_M}
%	\end{bmatrix}^T}_{\bold{\Omega}} W, \nonumber\\
%\end{align}
\begin{align}
\label{eq.h_circuit2}
\bold{H_\text{cir}}&=\underbrace{\begin{bmatrix}
	e^{-j\Delta_w\tau_1} & e^{-j\Delta_w\tau_2} & \cdots & e^{-j\Delta_w\tau_M}\\
	e^{-j2\Delta_w\tau_1} & e^{-j2\Delta_w\tau_2} & \cdots & e^{-j2\Delta_w\tau_M}\\
	\cdots & \cdots & \ddots& \cdots\\
	e^{-jK\Delta_w\tau_1}& e^{-jK\Delta_w\tau_2} & \cdots & e^{-jK\Delta_w\tau_M}
	\end{bmatrix}}_{\bold{\bold{\Omega}}} \bold{W}, \nonumber\\
\end{align}
where $\Delta_w$ is the sampling interval in the frequency domain.
Note that the $i$-th column of $\bold{\bold{\Omega}}$ indicates the frequency response of the time delay $\tau_i$.
The Wiener solution $\bold{\bold{W_o}}$ of~\eqref{eq.h_circuit} is derived in~\cite{lincoln} as
\begin{align}
\label{eq.wiener}
\bold{\bold{W_o}} &= \left(\bold{\bold{\Omega}}^*\bold{\bold{\Omega}}\right)^{-1}\bold{\bold{\Omega}}^*\bold{\bold{\hat{H}_{\text{SI}}}}, \\\nonumber
&=\bold{R}^{-1}\bold{\bold{\Omega}}^*\bold{\bold{\hat{H}_{\text{SI}}}},
\end{align} 
where $(\cdot)^*$ is the matrix Hermitian operation and $\bold{R}=\bold{\bold{\Omega}}^*\bold{\bold{\Omega}}$.
%Note that $R$ can be represented as the following hermitian matrix:
%\begin{align}
%\label{matrixR}
%R=\left[ {\begin{array}{ccccc}
%	K & \sum\limits_{k=0}^{K-1}\!\!e^{j\Delta_wk(\tau_1\!-\!\tau_2)} & \cdots & \sum\limits_{k=0}^{K-1}\!\!e^{j\Delta_wk(\tau_1\!-\!\tau_M)} \\
%	\sum\limits_{k=0}^{K-1}\!\!e^{j\Delta_wk(\tau_2\!-\!\tau_1)} & K & \cdots & \sum\limits_{k=0}^{K-1}\!\!e^{j\Delta_wk(\tau_2\!-\!\tau_M)} \\
%	\vdots & \vdots & \ddots & \vdots \\
%	\sum\limits_{k=0}^{K-1}\!\!e^{j\Delta_wk(\tau_M\!-\!\tau_1)} & \sum\limits_{k=0}^{K-1}\!\!e^{j\Delta_wk(\tau_M\!-\!\tau_2)} & \cdots & K \\
%	\end{array} } \right].
%\end{align}
By substituting~\eqref{eq.wiener} into (8), we get the frequency response of the optimized circuit $\bold{H_{\text{cir}}^o}$, 
\begin{align}
\label{hciro}
\bold{H_{\text{cir}}^o}=\bold{\Omega} \bold{W_o} = \bold{\Omega} \bold{R}^{-1}\bold{\Omega}^*\bold{\hat{H}_{\text{SI}}}.
\end{align}
The effective SI channel including the multi-tap circuit can be expressed as $\bold{H_\text{SI}}-\bold{H_\text{cir}^o}$.
The average power of the effective SI channel is then represented as follows:
\begin{align}
\label{eq.residual}
P_{H_{\text{eff}}}&=\frac{1}{K}\mathbb{E}\left[\text{tr}\left[(\bold{H_\text{SI}}-\bold{H_\text{cir}^o})(\bold{H_\text{SI}}-\bold{H_\text{cir}^o})^*\right]\right], 
\end{align} 
where $\text{tr}(\cdot)$ denotes the trace of a matrix.  
The authors in~\cite{theoretical} derived $P_{H_{\text{eff}}}$ as follows:
\begin{align}
\label{eq.residual2}
P_{H_{\text{eff}}} &= \frac{M}{K}\sigma^2 + \frac{1}{K}\text{tr}(\bold{H_{\text{SI}}}\bold{H_{\text{SI}}}^*-\bold{\Omega} \bold{R}^{-1}\bold{\Omega}^*\bold{H_{\text{SI}}}\bold{H_{\text{SI}}}^*).
\end{align} 
Note the authors in ~\cite{theoretical} assumed a deterministic SI channel. 
The analog SIC can be represented as $C=-10\log_{10}(P_{H_{\text{eff}}})$. In this paper, we denote the estimation error of the SI channel induced by noise, $r$, as follows:
\begin{align}
r=10\log_{10}\left(\frac{\sigma^2}{\text{Power}(\bold{H_{\text{SI}}})}\right).
\end{align}
Equation~\eqref{eq.residual2} shows that the multi-tap circuit's SIC performance heavily depends on the time delays of the circuit and the estimation error of the SI channel. 
The results in~\cite{theoretical}, however, neglected the quantization errors induced by the phase shifters and attenuators.  
Moreover, the estimation errors of the SI channel caused by the nonlinear distortions were simply treated as Gaussian noise, where the appropriate values of the mean and variance of the noise was not introduced. The authors in~\cite{MIMO} observed that the nonlinear distortions significantly reduce the multi-tap circuit's SIC performance. 

In the rest of this paper, we show that the resolutions of the phase shifters and attenuators are crucial for the analog SIC via multi-tap circuit. We also analyze the impact of the nonlinear distortions on the multi-tap circuit's SIC performance more precisely. 

%%%%%%%%%%%%%%%%%%%%%%%%%%%%%%%%%%%%%%%%%%

%\begin{table*}
%	\centering
%	{\normalsize
%		\begin{theorem}
%			The residual SI power with $B$-bit phase shifters and attenuator step size $\delta$ (in dB scale), $C^q$, can be rewritten as
%			\begin{align}
%			\label{eq.prop1}
%			C^q =& \frac{1}{K}{\mathrm{tr}}[H_{{\mathrm{SI}}}H_{{\mathrm{SI}}}^*]+\frac{(PA_1)^2-2PA_1}{K}{\mathrm{tr}}[\bold{\Omega} \bold{R}^{-1}\bold{\Omega}^*H_{{\mathrm{SI}}}H_{\mathrm{SI}}^*] +\frac{\sigma^2}{K} \left[(PA_1)^2M+(A_2-(PA_1)^2)K{\mathrm{tr}}(\bold{R}^{-1}) \right] \nonumber \\
%			&+\left(A_2-(PA_1)^2\right){\mathrm{tr}}\left[\bold{\Omega}(\bold{R}^{-1})^2\bold{\Omega}^*H_{\mathrm{SI}}H_{\mathrm{SI}}^* \right],
%			\end{align}
%			\mathrm{where}
%			$P=\frac{2^B}{\pi}\mathrm{sin} \left(\frac{\pi}{2^B} \right)$,
%			$A_1=\frac{20}{\delta\mathrm{ln}(10)} \left(10^{\frac{\delta}{40}}-10^{\frac{-\delta}{40}} \right)$, \mathrm{and} $A_2=\frac{10}{\delta\mathrm{ln}(10)} \left(10^{\frac{\delta}{20}}-10^{\frac{-\delta}{20}} \right)$.
%		\end{theorem}
%		\begin{proof}
%			See Appendix.
%		\end{proof}	
%	}
%	\medskip
%	\hrule
%\end{table*}

\section{Impact of Non-Ideal RF Components}
\subsection{Non-Ideal Attenuators and Phase Shifters}
\label{Sec.evaluation}

The attenuators and phase shifters are extensively adopted in the analog SIC, where the additional RF components regenerate the negative of SI. The authors in~\cite{atten_quantization,atten_quantization2} the impact of practical impairments of the attenuators and phase shifters such as phase-shift introduced by an attenuator are investigated. The quantization errors induced at the phase shifters and attenuators are considered in~\cite{atten_quantization, ian_codebook}. In~\cite{ian_codebook}, the authors proposed a mmWave beamforming codebook that minimizes the SI while achieving high beamforming gain~\cite{mmwavelens} over the desired coverage regions. The authors in~\cite{atten_quantization} numerically analyzed the multi-tap circuit's SIC performance with a finite attenuator stepsize.  

In this section, we derive the multi-tap circuit's SIC performance with the $B$-bit phase shifters and the attenuator with a stepsize of $\delta$ in dB scale. The values of attenuation $a_i$ and phase shift $\phi_i$ are then quantized as
\begin{align}
-&20\log_{10} a_i \in \left\{0, \delta, 2\delta, \cdots \right\}, \nonumber\\
-\phi &\in \left\{0, \frac{2\pi}{2^B}, \frac{4\pi}{2^B}, \cdots, \frac{(2^B-1)2\pi}{2^B} \right\}. \nonumber
\end{align}
Let $n_{a,i}$ and $n_{q,i}$ denote the quantization errors induced by the $i$-th attenuator and phase shifter. The quantization errors $n_{a,i}$ and $n_{q,i}$ are modeled as the uniform random variables ($\pmb{U}(\cdot)$),
\begin{align}
\label{eq.uniform}
&n_{a,i} \sim \pmb{U}\left(\left[-\frac{\delta}{2},\frac{\delta}{2}\right]\right), \\
&{n_{q,i}} \sim \pmb{U}\left(\left[-\frac{2\pi}{2^{B+1}},\frac{2\pi}{2^{B+1}}\right]\right).
\end{align}
The quantized attenuation and phase shifter $\tilde{a}^q_i$ and $\tilde{\phi}^q_i$ are then represented as 
\begin{align}
\label{quantization}
\tilde{a}^q_i &= \tilde{a}_i 10^{n_{a,i}/20}, \\
\tilde{\phi}^q_i &= \tilde{\phi}_i + n_{p,i}.
\end{align}

Now we define the quantization error matrix $\bold{Q}$, to represent the quantized tap coefficients $\bold{W_o}^q$ as follows: 
\begin{align}
\bold{W_o}^q &= \left[\tilde{a}_1^qe^{-j\tilde{\phi}_1^q},\tilde{a}_2^qe^{-j\tilde{\phi}_2^q},
\cdots, \tilde{a}_M^qe^{-j\tilde{\phi}_M^q}\right]^T \nonumber\\
&=\bold{W_o}\bold{Q}, \nonumber
\end{align}
where
\begin{align}
\label{eq.quan_coeff}
Q_{ij} &= \begin{cases}
10^{n_{a,i}/20}e^{-jn_{p,i}} &\text{for } i=j \\
0 &\text{for } i\neq j.
\end{cases} 
\end{align}
By replacing $\bold{W_o}$ as $\bold{W_o}^q$ in~\eqref{eq.h_circuit2} and~\eqref{eq.residual}, we obtain the average power of the effective SI channel with non-ideal attenuators and phase shifters ($P^q_{H_{\text{eff}}}$),
\begin{align}
\label{eq.residual_q}
P^q_{H_{\text{eff}}}&=\frac{1}{K}\mathbb{E}\left[\text{tr}\left[(\bold{H_\text{SI}}-\bold{H_\text{cir}^o}\bold{Q})(\bold{H_\text{SI}}-\bold{H_\text{cir}^o}\bold{Q})^*\right]\right]. 
\end{align} 
For convenience, we denote $\mathbb{E}\left[\bold{H_{{\mathrm{SI}}}H_{{\mathrm{SI}}}}^*\right]$ as $\bold{E_{H_{{\mathrm{SI}}}H_{{\mathrm{SI}}}^*}}$. 
\begin{figure*}[!t]
	\begin{center}	
		\subfigure[]	{\includegraphics[width=0.90\columnwidth,keepaspectratio]%
			{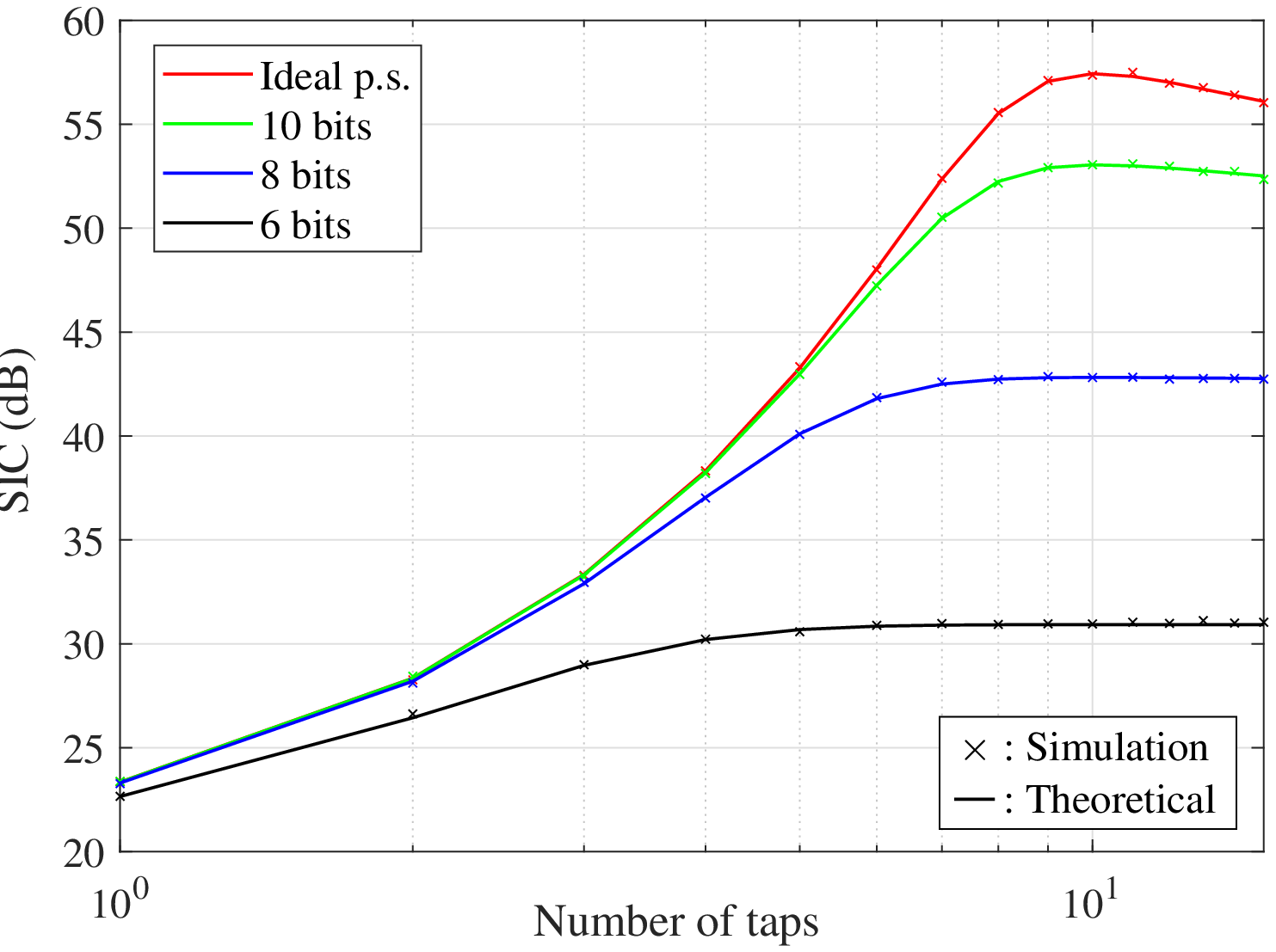}
			\label{fig_psbitcomp}}{\phantom{123}}
		\subfigure[]
		{\includegraphics[width=0.90\columnwidth,keepaspectratio]%
			{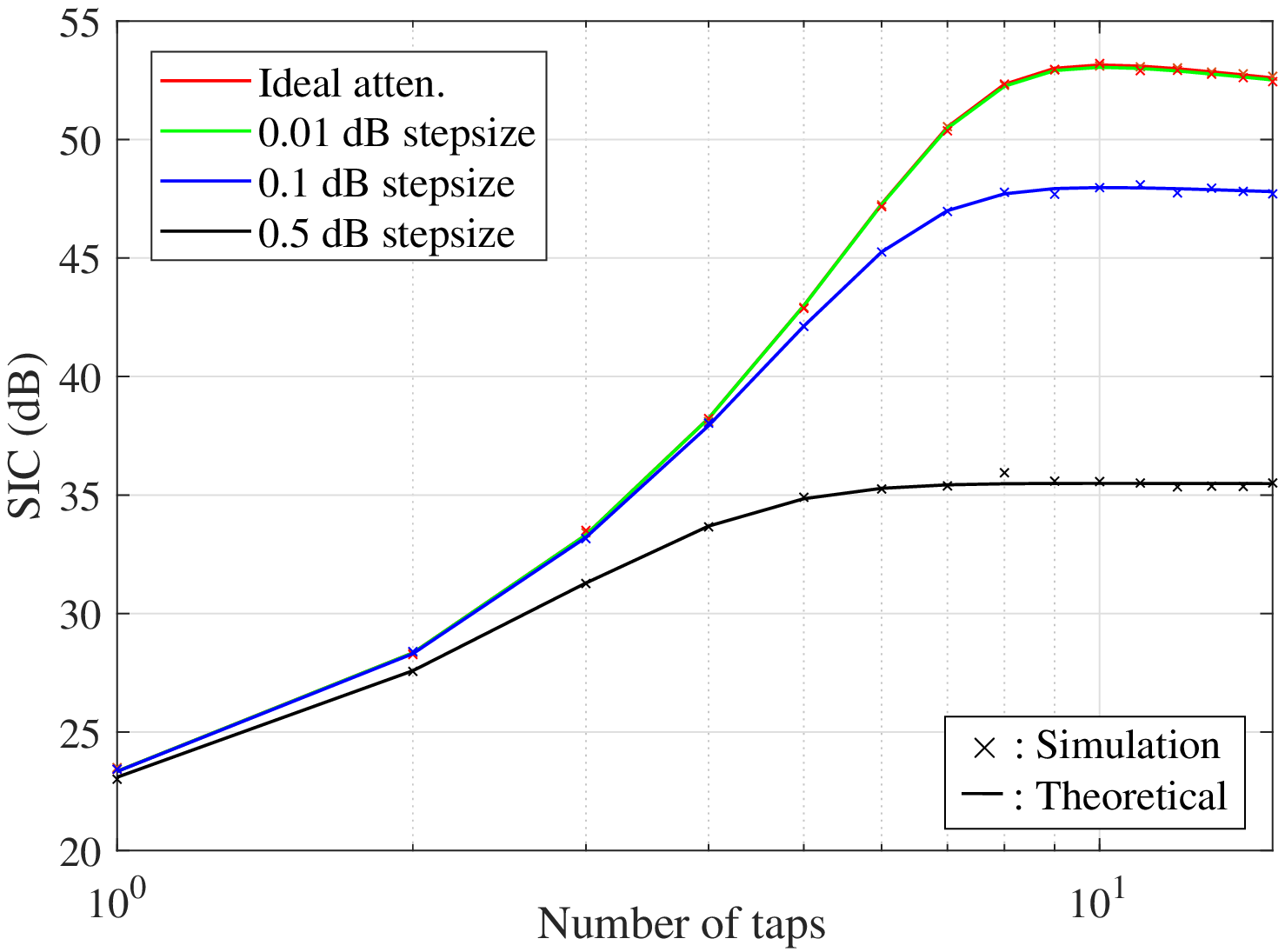}
			\label{fig_attendeltacomp}}
		\caption{SIC versus the number of taps  with different attenuator and phase shifter resolutions. The estimation error of SI channel induced by noise $r=-50~\text{dB}$. (a) The stepsize of the attenuator is fixed at 0.01 dB. (b) The phase shifter bits is fixed at 10.}
	\end{center}
\end{figure*}
\begin{theorem}
	The average power of the effective SI channel with the multi-tap circuit with $B$-bit phase shifters and attenuator stepsize $\delta$ (in dB scale), $P^q_{H_{\text{eff}}}$, can be rewritten as
	\begin{align}
	\label{eq.prop1}
	P^q_{H_{\text{eff}}} =& \frac{1}{K}{\mathrm{tr}}[\bold{E_{H_{{\mathrm{SI}}}H_{{\mathrm{SI}}}^*}}]+\frac{(PA_1)^2-2PA_1}{K}{\mathrm{tr}}[\bold{\Omega} \bold{R}^{-1}\bold{\Omega}^*\bold{E_{H_{{\mathrm{SI}}}H_{{\mathrm{SI}}}^*}}] \nonumber\\ 
	&+\frac{\sigma^2}{K} \left[(PA_1)^2M+(A_2-(PA_1)^2)K{\mathrm{tr}}(\bold{R}^{-1}) \right] \nonumber \\
	&+\left(A_2-(PA_1)^2\right){\mathrm{tr}}\left[\bold{\Omega}(\bold{R}^{-1})^2\bold{\Omega}^*\bold{E_{H_{{\mathrm{SI}}}H_{{\mathrm{SI}}}^*}} \right],
	\end{align}
	where
	$P=\frac{2^B}{\pi}\mathrm{sin} \left(\frac{\pi}{2^B} \right)$,
	$A_1=\frac{20}{\delta\mathrm{ln}(10)} \left(10^{\frac{\delta}{40}}-10^{\frac{-\delta}{40}} \right)$, and $A_2=\frac{10}{\delta\mathrm{ln}(10)} \left(10^{\frac{\delta}{20}}-10^{\frac{-\delta}{20}} \right)$.
\end{theorem}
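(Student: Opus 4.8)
The plan is to substitute the Wiener solution $\bold{W_o}=\bold{R}^{-1}\bold{\Omega}^*\bold{\hat{H}_{\text{SI}}}$ and the estimate model \eqref{h_noise} into the quantized residual-power expression \eqref{eq.residual_q}, and then evaluate the expectation by iterating over the three independent sources of randomness: the quantization matrix $\bold{Q}$, the estimation noise $\bold{N}$, and the SI channel $\bold{H_{\text{SI}}}$. First I would record the two moments of a single quantization factor $Q_{ii}$. Since $n_{a,i}$ and $n_{p,i}$ are independent, direct integration of the uniform densities in \eqref{eq.uniform} gives $\mathbb{E}[Q_{ii}]=\mathbb{E}[10^{n_{a,i}/20}]\,\mathbb{E}[e^{-jn_{p,i}}]=A_1P$ and $\mathbb{E}[|Q_{ii}|^2]=\mathbb{E}[10^{n_{a,i}/10}]=A_2$; these are exactly the elementary integrals that define $P$, $A_1$, and $A_2$ in the statement. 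Because $\bold{Q}$ is diagonal with i.i.d.\ entries, it follows that $\mathbb{E}[\bold{Q}]=PA_1\bold{I}$ and, for any fixed vector $\bold{v}$, that $\mathbb{E}[\bold{Q}\bold{v}\bold{v}^*\bold{Q}^*]=(PA_1)^2\bold{v}\bold{v}^*+(A_2-(PA_1)^2)\,\mathrm{diag}(\bold{v}\bold{v}^*)$, where $\mathrm{diag}(\cdot)$ retains only the main diagonal. This last identity is \emph{the crux}: the off-diagonal entries see the squared mean $(PA_1)^2$ while the diagonal entries see the full second moment $A_2$, and the mismatch is precisely what produces the extra $(\bold{R}^{-1})^2$ term absent from the ideal-component result \eqref{eq.residual2}.

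Next I would expand \eqref{eq.residual_q} into the four traces arising from $(\bold{H_{\text{SI}}}-\bold{\Omega}\bold{Q}\bold{W_o})(\bold{H_{\text{SI}}}-\bold{\Omega}\bold{Q}\bold{W_o})^*$ and average over $\bold{Q}$ first, holding $\bold{H_{\text{SI}}}$ and $\bold{N}$ fixed. Each of the two cross terms contracts $\bold{Q}$ to its mean and, after using $\bold{W_o}=\bold{R}^{-1}\bold{\Omega}^*\bold{\hat{H}_{\text{SI}}}$ and the cyclic property of the trace, contributes a copy of $-PA_1\,\mathrm{tr}[\bold{\Omega}\bold{R}^{-1}\bold{\Omega}^*\bold{H_{\text{SI}}}\bold{\hat{H}_{\text{SI}}}^*]$. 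The quadratic term, via the second-moment identity above with $\bold{v}=\bold{W_o}$, splits into a mean-square piece $(PA_1)^2\,\mathrm{tr}[\bold{\Omega}\bold{W_o}\bold{W_o}^*\bold{\Omega}^*]$ and a variance piece $(A_2-(PA_1)^2)\,\mathrm{tr}[\bold{\Omega}\,\mathrm{diag}(\bold{W_o}\bold{W_o}^*)\,\bold{\Omega}^*]$.

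The step I expect to carry most of the weight is simplifying these two pieces using the structure of $\bold{\Omega}$ and $\bold{R}$. For the mean-square piece I would use that $\bold{G}:=\bold{\Omega}\bold{R}^{-1}\bold{\Omega}^*$ is Hermitian and idempotent with $\mathrm{tr}(\bold{G})=\mathrm{tr}(\bold{R}^{-1}\bold{R})=M$, so that it collapses to $(PA_1)^2\,\mathrm{tr}[\bold{G}\,\bold{\hat{H}_{\text{SI}}}\bold{\hat{H}_{\text{SI}}}^*]$. For the variance piece the key observation is that every entry of $\bold{\Omega}$ has unit modulus; hence each diagonal entry of $\bold{\Omega}\,\mathrm{diag}(\bold{W_o}\bold{W_o}^*)\,\bold{\Omega}^*$ equals $\sum_i |W_{o,i}|^2$, and summing over the $K$ rows gives the clean identity $\mathrm{tr}[\bold{\Omega}\,\mathrm{diag}(\bold{W_o}\bold{W_o}^*)\,\bold{\Omega}^*]=K\,\mathrm{tr}[\bold{W_o}\bold{W_o}^*]=K\,\mathrm{tr}[(\bold{R}^{-1})^2\bold{\Omega}^*\bold{\hat{H}_{\text{SI}}}\bold{\hat{H}_{\text{SI}}}^*\bold{\Omega}]$. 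This is the delicate point: without the unit-modulus property the diagonal piece would not reduce to a trace involving $\bold{R}^{-1}$ alone, and the result would not admit the compact closed form of the statement.

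Finally I would take expectations over $\bold{N}$ and $\bold{H_{\text{SI}}}$, using their independence together with $\mathbb{E}[\bold{N}]=\bold{0}$ and $\mathbb{E}[\bold{N}\bold{N}^*]=\sigma^2\bold{I}$, so that $\mathbb{E}[\bold{\hat{H}_{\text{SI}}}\bold{\hat{H}_{\text{SI}}}^*]=\bold{E_{H_{{\mathrm{SI}}}H_{{\mathrm{SI}}}^*}}+\sigma^2\bold{I}$ and $\mathbb{E}[\bold{H_{\text{SI}}}\bold{\hat{H}_{\text{SI}}}^*]=\bold{E_{H_{{\mathrm{SI}}}H_{{\mathrm{SI}}}^*}}$. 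Substituting these, the $\sigma^2\bold{I}$ contributions generate the two $\sigma^2$ terms, namely $(PA_1)^2M$ from $\mathrm{tr}(\bold{G})$ and $(A_2-(PA_1)^2)K\,\mathrm{tr}(\bold{R}^{-1})$ from $\mathrm{tr}[(\bold{R}^{-1})^2\bold{R}]$, while the $\bold{E_{H_{{\mathrm{SI}}}H_{{\mathrm{SI}}}^*}}$ contributions assemble, after one more use of the cyclic property, into the coefficient $\tfrac{(PA_1)^2-2PA_1}{K}$ on $\mathrm{tr}[\bold{\Omega}\bold{R}^{-1}\bold{\Omega}^*\bold{E_{H_{{\mathrm{SI}}}H_{{\mathrm{SI}}}^*}}]$ and the coefficient $A_2-(PA_1)^2$ on $\mathrm{tr}[\bold{\Omega}(\bold{R}^{-1})^2\bold{\Omega}^*\bold{E_{H_{{\mathrm{SI}}}H_{{\mathrm{SI}}}^*}}]$. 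Collecting all contributions and dividing by $K$ then reproduces \eqref{eq.prop1}.
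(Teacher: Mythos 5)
Your proposal is correct and follows essentially the same route as the paper's Appendix-A proof: the same moments $\mathbb{E}[Q_{ii}]=PA_1$ and $\mathbb{E}[|Q_{ii}|^2]=A_2$, the same diagonal/off-diagonal split of the quantization second moment (your rank-one identity $\mathbb{E}[\bold{Q}\bold{v}\bold{v}^*\bold{Q}^*]=(PA_1)^2\bold{v}\bold{v}^*+(A_2-(PA_1)^2)\,\mathrm{diag}(\bold{v}\bold{v}^*)$ is the paper's trace identity \eqref{eq.lemmaprof} in disguise), and the same structural fact about $\bold{\Omega}$ (your unit-modulus observation is exactly the paper's $R_{ii}=K$). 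The only cosmetic differences are your explicit use of the idempotence of $\bold{\Omega}\bold{R}^{-1}\bold{\Omega}^*$ where the paper manipulates traces cyclically.
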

\begin{proof}
	See Appendix-A.
\end{proof}	

\begin{corollary} 
	\label{corollary1}
	With ideal attenuators and phase shifters,~\eqref{eq.prop1} can be rewritten as~\eqref{eq.residual2}, i.e., 
	$\lim_{B\to\infty, \delta\to0}P^q_{H_{\text{eff}}} \!=\! P_{H_{\text{eff}}}.$
\end{corollary}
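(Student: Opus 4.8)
The plan is to show that the three scaling constants $P$, $A_1$, and $A_2$ appearing in~\eqref{eq.prop1} each converge to $1$ in the stated joint limit, and then to substitute these limiting values directly into~\eqref{eq.prop1} and verify that it collapses term-by-term to~\eqref{eq.residual2}. Since $B$ and $\delta$ are independent parameters, the joint limit $B\to\infty,\ \delta\to0$ factors into the two separate one-parameter limits, which I would treat in turn.

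First I would evaluate $\lim_{B\to\infty}P$. Writing $x=\pi/2^B$ gives $P=\sin(x)/x$ with $x\to0^+$, so $P\to1$ by the elementary limit $\lim_{x\to0}\sin(x)/x=1$. Next I would handle the attenuator constants as $\delta\to0$. Using $10^{c\delta}=e^{c\delta\ln10}$ together with $\lim_{t\to0}(e^{t}-e^{-t})/t=2$, one obtains $\lim_{\delta\to0}(10^{\delta/40}-10^{-\delta/40})/\delta=\tfrac{\ln10}{20}$ and $\lim_{\delta\to0}(10^{\delta/20}-10^{-\delta/20})/\delta=\tfrac{\ln10}{10}$; multiplying by the respective prefactors $20/\ln10$ and $10/\ln10$ yields $\lim_{\delta\to0}A_1=1$ and $\lim_{\delta\to0}A_2=1$. (Equivalently, a first-order Taylor expansion of $10^{\pm c\delta}$ about $\delta=0$ gives the same values.)

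With $P\to1$, $A_1\to1$, and $A_2\to1$ in hand, I would compute the limit of each coefficient in~\eqref{eq.prop1}: the product $PA_1\to1$, so $(PA_1)^2\to1$ and hence $(PA_1)^2-2PA_1\to-1$; crucially, the combination $A_2-(PA_1)^2\to1-1=0$. Substituting these values, the final term of~\eqref{eq.prop1} vanishes identically, the $(A_2-(PA_1)^2)K\,\mathrm{tr}(\bold{R}^{-1})$ contribution inside the $\sigma^2$ bracket drops out, and the surviving terms reduce to $\tfrac1K\mathrm{tr}[\bold{E_{H_{\mathrm{SI}}H_{\mathrm{SI}}^*}}]-\tfrac1K\mathrm{tr}[\bold{\Omega R^{-1}\Omega^* E_{H_{\mathrm{SI}}H_{\mathrm{SI}}^*}}]+\tfrac{\sigma^2M}{K}$. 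Finally, invoking the deterministic-SI-channel assumption of~\cite{theoretical}, under which $\bold{E_{H_{\mathrm{SI}}H_{\mathrm{SI}}^*}}=\bold{H_{\mathrm{SI}}H_{\mathrm{SI}}^*}$, this is exactly~\eqref{eq.residual2}, completing the argument.

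The calculation is routine throughout; the only step demanding care is the cancellation $A_2-(PA_1)^2\to0$, since $A_2$ and $(PA_1)^2$ each tend to $1$ separately and it is precisely their difference that annihilates the two quantization-induced terms. For that reason I would state the limits of $P$, $A_1$, and $A_2$ explicitly before combining them, so that the cancellation is made transparent rather than left implicit.
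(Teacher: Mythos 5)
Your proof is correct and follows essentially the same route as the paper: compute $\lim_{B\to\infty}P=\lim_{\delta\to0}A_1=\lim_{\delta\to0}A_2=1$ and substitute into~\eqref{eq.prop1}. The only difference is that you spell out the term-by-term collapse (in particular the cancellation $A_2-(PA_1)^2\to0$ and the identification $\bold{E_{H_{\mathrm{SI}}H_{\mathrm{SI}}^*}}=\bold{H_{\mathrm{SI}}H_{\mathrm{SI}}^*}$ under the deterministic-channel assumption), which the paper leaves implicit.
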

\begin{proof}
	By taking $B\to\infty$, and $\delta\to0$, we get the following:
	\begin{align}
	\label{lim_pa}
	\lim_{B\to\infty}P &=\lim_{B\to\infty}\frac{2^B}{\pi}\mathrm{sin}\left(\frac{\pi}{2^B}\right)=1, \ \  \nonumber \\
	\lim_{\delta \to 0}A_1 &=\lim_{\delta \to 0}\frac{20}{\delta\mathrm{ln}(10)}\left(10^{\frac{\delta}{40}}-10^{\frac{-\delta}{40}} \right)= 1,  \nonumber\\
	\lim_{\delta \to 0}A_2 &=\lim_{\delta \to 0}\frac{10}{\delta\mathrm{ln}(10)}\left(10^{\frac{\delta}{20}}-10^{\frac{-\delta}{20}} \right)= 1.
	\end{align} 
	We obtain Corollary~\ref{corollary1} by substituting $P\!=\!A_1\!=\!A_2\!=\!1$ in~\eqref{eq.prop1}.
\end{proof}

%\begin{figure}[t]
%	\begin{center}
%		\resizebox{3.5in}{!}{\includegraphics{figures/mesh2.eps}}
%		\caption{The achievable SIC with different phase shifter and attenuator resolutions. }	
%		\label{mesh_atten_phase_50}
%	\end{center}
%\end{figure}
In~\eqref{eq.prop1}, $P$ and $(A_1,A_2)$ are associated with the resolutions of the phase shifters and attenuators, respectively.
Based on \eqref{eq.prop1}, we calculate the analog SIC via multi-tap circuit with non-ideal phase shifters and attenuators, $C^q=-10\log_{10}(P^q_{H_{\text{eff}}})$.
The SIC that provided by the multi-tap circuit with different phase shifter and attenuator resolutions are shown in Fig.~\ref{fig_psbitcomp} and Fig.~\ref{fig_attendeltacomp}. The simulation values are matched well with the theoretical values.  
%The simulation parameters are introduced in Table 1. 
The expected SI channel tap gains are set to $\{0~\text{dB}, -25~ \text{dB},-30~\text{dB}, \cdots, -75~\text{dB}\}$~\cite{theoretical} and the estimation error of the SI channel induced by noise is set to $-50$ dB. We transmit a 64QAM OFDM symbol with 64 subcarriers for the SI channel estimation.
For the SI channel estimation, all subcarriers are used.
To see the impact of the limited resolutions of phase shifters and attenuators, we set the delay of the $i$-th tap to that of the SI channel (i.e., $\tau_i=iT$).
The red line of Fig.~\ref{fig_psbitcomp} corresponds to the case of the ideal phase shifters. For the other three lines, we set the resolutions of the phase shifters to $\{10, 8, 6\}$ bits. The resolutions of the attenuators are fixed at 0.01 dB for Fig.~\ref{fig_psbitcomp}.
In Fig.~\ref{fig_attendeltacomp}, we compare the cases of the ideal attenuators and the attenuators with stepsize $\{0.01,0.1,0.5\}$ dB. 
We observe a severe degradation of the SIC ability of the multi-tap circuit due to the quantization errors induced by the phase shifters and attenuators.
%\begin{figure*}[!t]
%	\begin{center}	
%		\subfigure[]	{\includegraphics[width=0.90\columnwidth,keepaspectratio]%
%			{figures/pa_image.png}
%			\label{pa_image}}{\phantom{123}}
%		\subfigure[]
%		{\includegraphics[width=0.90\columnwidth,keepaspectratio]%
%			{}
%			\label{PA_input_vs_output}}
%		\caption{(a) The practical PA adopted in~\cite{prototyping,minsoomag} and PXIe platform. (b) A fitting curve of the input and output signals of the PA (red line). The corresponding polynomial is $x_{\text{PA}}=35.89x-2.24|x|^{2}x$. }
%	\end{center}
%\end{figure*}

%
%In this case the SIC amount linearly decreases with the number of taps $M$. Accordingly, we can observe that the cancellation amount represented in the top line decreases in linear scale when $M\geq12$. The below three lines does not decrease in linear scale because the term $\frac{1-2PA_1+A_2}{K}{\mathrm{tr}}(E_{H_{\mathrm{SI}}H_{\mathrm{SI}}^*})$ in~\eqref{eq.cor2} remains.
To see the fundamental limits of achievable SIC performance with practical attenuators and phase shifters, we prove the following: 
\begin{corollary}
	Even if the power of noise $\sigma^2$ is zero, and the time delays of the multi-tap circuit perfectly match that of the SI channel, the SI still remains; this is due to the quantization errors induced by the phase shifters and attenuators, where the minimum average power of the effective SI channel with the multi-tap circuit is 
	\begin{align}
	\label{eq.cor2}
	\tilde{P}^q_{H_{\text{eff}}} = \frac{1-2PA_1+A_2}{K}{\mathrm{tr}}(\bold{E_{H_{{\mathrm{SI}}}H_{{\mathrm{SI}}}^*}}).	
	\end{align}
\end{corollary}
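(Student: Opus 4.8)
The plan is to specialize Theorem~1, i.e.\ equation~\eqref{eq.prop1}, to the regime described by the two hypotheses of the corollary: zero noise ($\sigma^2=0$) and perfectly matched delays ($\tau_i=iT$). Setting $\sigma^2=0$ immediately annihilates the entire third line of~\eqref{eq.prop1}, so that only the first, second, and fourth terms survive. The work then reduces to evaluating the two traces $\mathrm{tr}[\bold{\Omega}\bold{R}^{-1}\bold{\Omega}^*\bold{E_{H_{\mathrm{SI}}H_{\mathrm{SI}}^*}}]$ and $\mathrm{tr}[\bold{\Omega}(\bold{R}^{-1})^2\bold{\Omega}^*\bold{E_{H_{\mathrm{SI}}H_{\mathrm{SI}}^*}}]$ under the matched-delay assumption, and then collecting coefficients.

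First I would exploit the structure of $\bold{\Omega}$ when $\tau_i=iT$. With this choice the $i$-th column of $\bold{\Omega}$ becomes the DFT vector with entries $e^{-j2\pi ki/K}$, and distinct taps yield orthogonal columns, so that $\bold{R}=\bold{\Omega}^*\bold{\Omega}=K\bold{I}_M$ and hence $\bold{R}^{-1}=\tfrac{1}{K}\bold{I}_M$. Consequently $\bold{\Omega}\bold{R}^{-1}\bold{\Omega}^*=\tfrac{1}{K}\bold{\Omega}\bold{\Omega}^*$ is the orthogonal projector onto the column space of $\bold{\Omega}$.

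The crux of the argument is that, under matched delays, the SI channel vector lies in this column space. Indeed, from~\eqref{h_si_freq} each $H_\text{SI}[k]=\sum_\ell c_\ell e^{-j2\pi k\ell/K}$ is a linear combination of the columns of $\bold{\Omega}$ whenever the circuit delays include the channel taps, so $\bold{H_\text{SI}}=\bold{\Omega}\bold{c}$ for a (zero-padded) tap-gain vector $\bold{c}$, and the projector acts as the identity on it: $\bold{\Omega}\bold{R}^{-1}\bold{\Omega}^*\bold{H_\text{SI}}=\bold{H_\text{SI}}$. Using the cyclic property of the trace this yields $\mathrm{tr}[\bold{\Omega}\bold{R}^{-1}\bold{\Omega}^*\bold{E_{H_{\mathrm{SI}}H_{\mathrm{SI}}^*}}]=\mathrm{tr}[\bold{E_{H_{\mathrm{SI}}H_{\mathrm{SI}}^*}}]$. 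For the fourth term I would use $\bold{R}^{-1}=\tfrac{1}{K}\bold{I}_M$ to write $\bold{\Omega}(\bold{R}^{-1})^2\bold{\Omega}^*=\tfrac{1}{K}\bold{\Omega}\bold{R}^{-1}\bold{\Omega}^*$, whence $\mathrm{tr}[\bold{\Omega}(\bold{R}^{-1})^2\bold{\Omega}^*\bold{E_{H_{\mathrm{SI}}H_{\mathrm{SI}}^*}}]=\tfrac{1}{K}\mathrm{tr}[\bold{E_{H_{\mathrm{SI}}H_{\mathrm{SI}}^*}}]$.

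Finally I would substitute these two evaluations, together with $\sigma^2=0$, into~\eqref{eq.prop1} and collect the coefficients of $\tfrac{1}{K}\mathrm{tr}[\bold{E_{H_{\mathrm{SI}}H_{\mathrm{SI}}^*}}]$, obtaining $1+\big((PA_1)^2-2PA_1\big)+\big(A_2-(PA_1)^2\big)=1-2PA_1+A_2$, which is exactly~\eqref{eq.cor2}. I expect the main obstacle to be the projector step: one must carefully justify both that matched delays make $\bold{R}$ a scaled identity and, more importantly, that $\bold{H_\text{SI}}$ lies in the range of $\bold{\Omega}$ so that the projection leaves it unchanged. Once the identity $\bold{\Omega}\bold{R}^{-1}\bold{\Omega}^*\bold{H_\text{SI}}=\bold{H_\text{SI}}$ is in hand, the remainder is a routine combination of scalar coefficients.
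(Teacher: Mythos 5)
Your proposal is correct and follows essentially the same route as the paper: set $\tau_i=iT$ so that $\bold{R}=K\bold{I}_M$, reduce every trace to $\mathrm{tr}(\bold{E_{H_{{\mathrm{SI}}}H_{{\mathrm{SI}}}^*}})$ via the projector identity $\mathrm{tr}(\bold{\Omega}\bold{R}^{-1}\bold{\Omega}^*\bold{E_{H_{{\mathrm{SI}}}H_{{\mathrm{SI}}}^*}})=\mathrm{tr}(\bold{E_{H_{{\mathrm{SI}}}H_{{\mathrm{SI}}}^*}})$, set $\sigma^2=0$, and collect coefficients to get $1-2PA_1+A_2$. The only difference is cosmetic: you actually justify the projector step by showing $\bold{H_\text{SI}}$ lies in the column space of $\bold{\Omega}$ (implicitly requiring $M\ge L$, which the paper states explicitly), whereas the paper simply cites this identity from a reference.
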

\begin{proof}
	Let $\tau_i=iT$. Then we get $\bold{R}=K\bold{I_M}$, since 
	\begin{align}
	R_{ab} =& \sum_{k=1}^{K}e^{-jk\Delta_wT(a-b)}\\ \nonumber
	=&\begin{cases}
	K &\text{for }a=b\\
	0 &\text{for } a\neq b.
	\end{cases}   
	\end{align}
	By substituting $R=K\bold{I_M}$, \eqref{eq.prop1} can be simplified as 
	\begin{align}
	\label{eq.cor3}
	\tilde{P}^q_{H_{\text{eff}}} =& \frac{1}{K}{\mathrm{tr}}(\bold{E_{H_{{\mathrm{SI}}}H_{{\mathrm{SI}}}^*}})-
\frac{2PA_1-A_2}{K}{\mathrm{tr}}(\bold{\Omega} \bold{R}^{-1} \bold{\Omega}^* \bold{E_{H_{{\mathrm{SI}}}H_{{\mathrm{SI}}}^*}})\\ \nonumber &+ \frac{\sigma^2}{K}A_2 M.
	\end{align}
	When the number of circuit's taps, $M$, is greater than or equal to the number of SI channel's taps,  $L_f$,  ${\mathrm{tr}}(\bold{\Omega} \bold{R}^{-1} \bold{\Omega}^* \bold{E_{H_{{\mathrm{SI}}}H_{{\mathrm{SI}}}^*}})$ becomes ${\mathrm{tr}}\left(\bold{E_{H_{{\mathrm{SI}}}H_{{\mathrm{SI}}}^*}}\right)$~\cite{theoretical}.
	With this, we get~\eqref{eq.cor2} by substituting $\sigma^2\!=\!0$ into \eqref{eq.cor3}, since we assume the SI channel estimation is perfect.
\end{proof}

Note that $K,\bold{E_{H_{{\mathrm{SI}}}H_{{\mathrm{SI}}}^*}}, \bold{\Omega}$, and $M$ are the parameters associated with the SI channel and the circuit's tap delay configuration. 
The term $(1\!\!-\!\!2PA_1\!\!+\!\!A_2)$ can be interpreted as an indicator of the SIC performance degradation due to the limited resolutions of the phase shifters and attenuators. If we assume the ideal phase shifters and attenuators,~\eqref{eq.cor2} goes to 0, which implies the perfect SIC.
The key factor of the analog SIC performance, as demonstrated by the simulation and analytical analysis, are the resolutions of the phase shifters and attenuators. Thus, the resolutions of the phase shifters and attenuators should be taken account into the design of the multi-tap circuit and it's optimization algorithm.

\subsection{Non-Ideal Power Amplifier}
\label{sec.pa}

\begin{figure}
	\begin{center}
		\resizebox{3.3in}{!}{\includegraphics{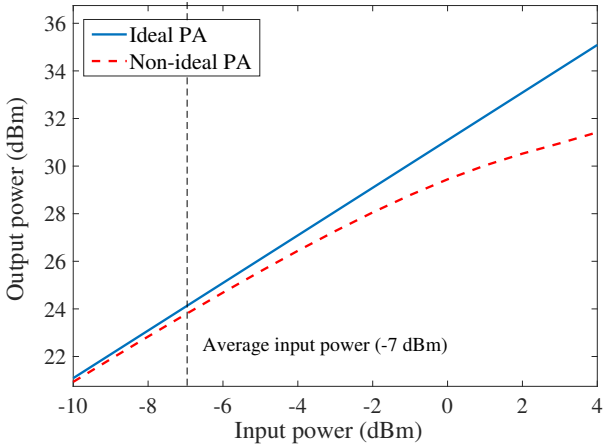}}
		\caption{A fitting curve of the input and output signals of the PA (red line). The corresponding polynomial is $x_{\text{PA}}=35.89x-2.24|x|^{2}x$. The practical PA adopted in~\cite{prototyping,minsoomag} and PXIe platform. }	
		\label{PA_input_vs_output}
	\end{center}
\end{figure}

In this section, we investigate the effect of PA nonlinearity on the analog SIC performance. In the systems with high peak-to-average power ratio (PAPR) such as orthogonal frequency division multiplexing (OFDM), researchers consider PA nonlinearity to be the main hurdle in the digital SIC~\cite{nonlinDSIC,iterative_digital_nonlin,measurements2015,unitygain}. 
As the PA nonlinearity deteriorates the accuracy of the SI channel estimation, it also affects the SIC performance of the multi-tap circuit. The authors in~\cite{MIMO} implemented an OFDM-based full-duplex system with the multi-tap circuit and observed that the transmitter produces the nonlinearities 30 dB lower than the transmitted signal. This phenomenon fundamentally limits the multi-tap circuit's SIC performance to 30 dB. Therefore, the authors in~\cite{MIMO} proposed an iterative tuning algorithm to solve it and achieved 60 dB cancellation. 

We adopt the parallel Hammerstein model to analyze a channel estimation error caused by the practical (i.e., nonlinear) PA.
The parallel Hammerstein model represents the relationship between the input and output signals of the PA as follows:
\begin{align}
\label{eq.pa}
x_{\text{PA}}[n]=&\sum_{p=0}^{P-1}\psi_{2p+1}|x[n]|^{2p}x[n]\nonumber\\
=&\psi_1x[n]+\underbrace{\psi_{3}|x[n]|^{2}x[n]}_{x_3[n]}+\cdots,
\end{align} 
where $x[n]$ and ${x_{\text{PA}}}[n]$ is the transmitted and power amplifier output signals on time $n$, $2P\!-\!1$ is the highest order of the model, and $\psi_{p}$ are the nonlinear coefficients.  
Note that the higher-order terms can usually ignored~\cite{minsoomag,MIMO}. 
We adopt the 3-order parallel Hammerstein model (i.e., $P=2$).
In~\eqref{eq.pa}, $\psi_1$ and $\psi_3$ denote the linear gain and 3-order gain in the time-domain signal, respectively.  
The authors in~\cite{minsoomag} fitted the parallel Hammerstein model according to the measured input/output power of the PA\footnote{Mini-Circuits ZVA-183W+ Super Ultra Wideband Amplifier \\http://www.minicircuits.com/pdfs/ZVA-183W+.pdf}.
We depict the power of input and output signals of the modeled PA in Fig.~\ref{PA_input_vs_output}, where the corresponding polynomial is $x_{\text{PA}}[n]=35.89x[n]-2.24|x[n]|^{2}x[n]$.
In the theoretical analysis, we assume that the linear gain $\psi_1$ is 1.
\begin{figure}
	\begin{center}
		\resizebox{3.5in}{!}{\includegraphics{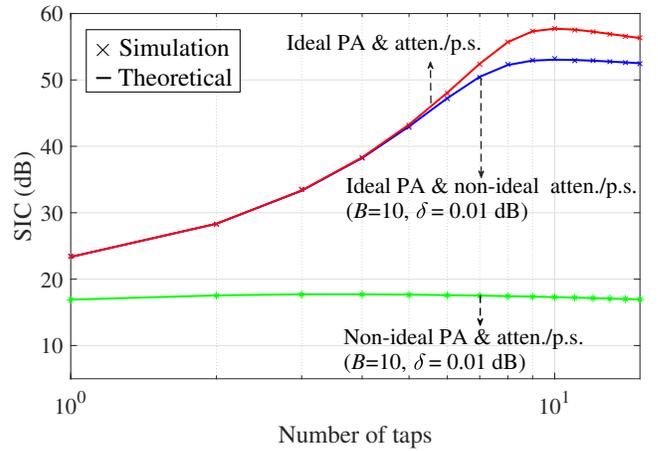}}
		\caption{SIC versus the number of taps ($M$) with the ideal PA (the  red and blue lines) and the practical PA (the green line).	  
			The estimation error of SI channel induced by noise, $r=-50~\text{dB}$. $B\!=\!10$ and $\delta\!=\!0.01~\text{dB}$.}	
		\label{PA_nonlin_MS_vs_lin}
	\end{center}
\end{figure}
 \begin{figure*}[!t]
 	\begin{center}	
 		\subfigure[]	{\includegraphics[width=0.90\columnwidth,keepaspectratio]%
 			{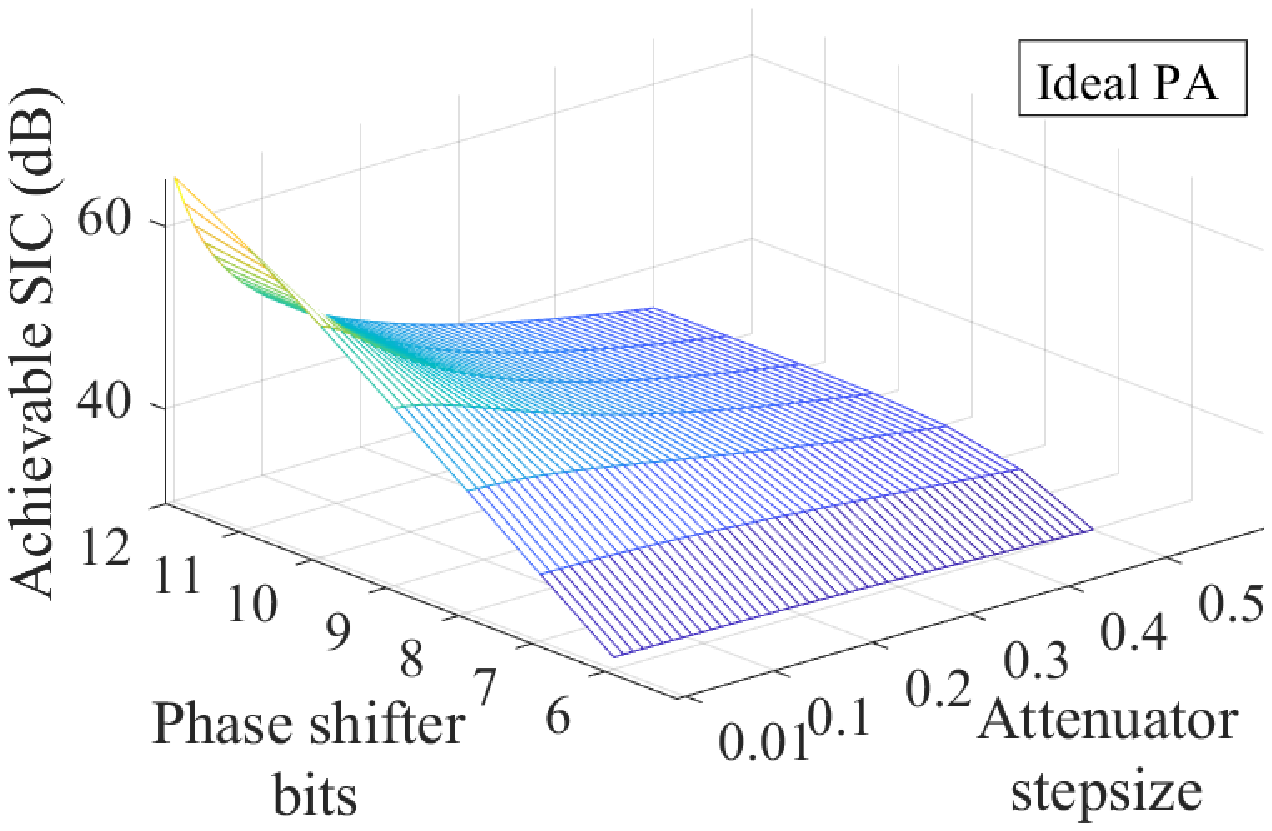}
 			\label{mesh_lin}}{\phantom{123}}
 		\subfigure[]
 		{\includegraphics[width=0.90\columnwidth,keepaspectratio]%
 			{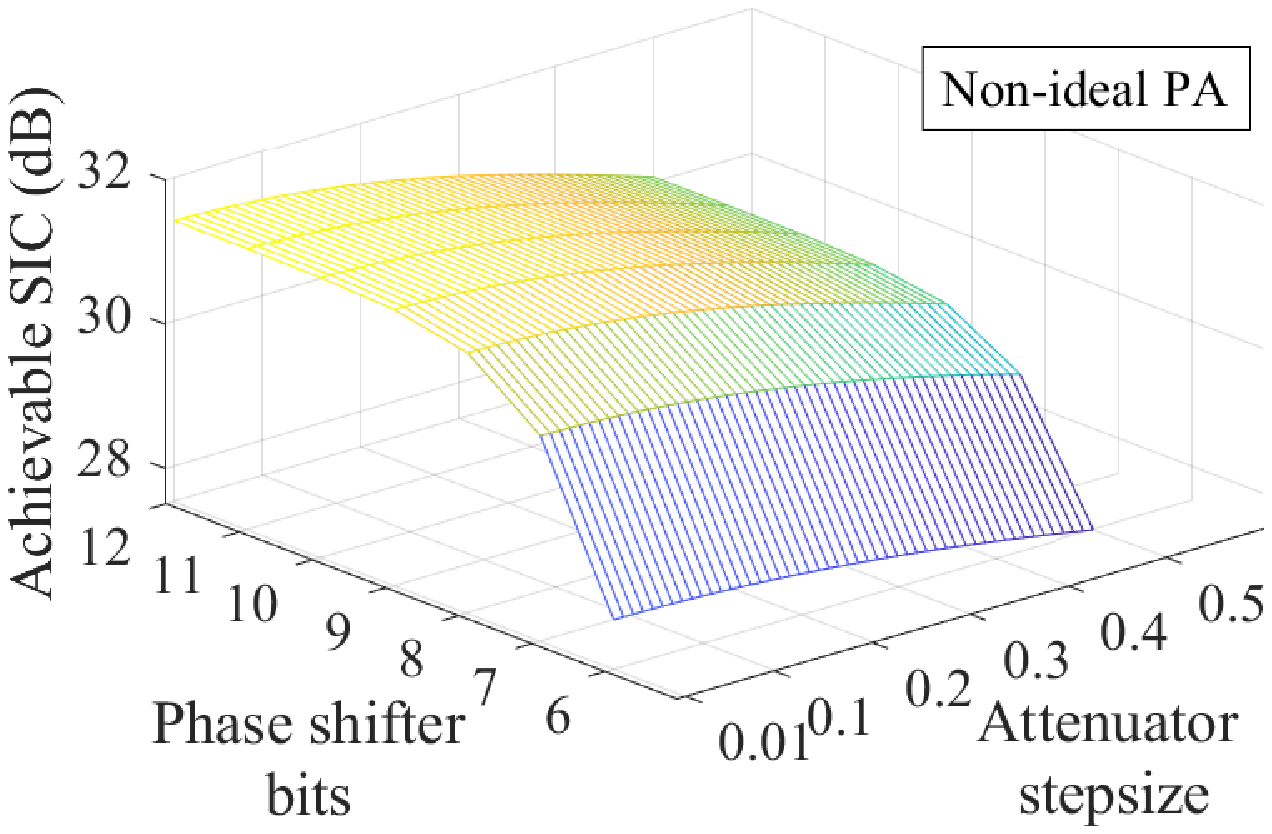}
 			\label{mesh_nonlin}}
 		\caption{The achievable SIC with different phase shifter and attenuator resolutions. (a) The case of the ideal PA. (b) The case of the non-ideal PA (see Fig.~\ref{PA_input_vs_output}).}
 	\end{center}
 \end{figure*}
For the third case, we employ the measured PA characteristics .     
As we assume unity linear gain in the theoretical analysis, we normalized the polynomial by $\psi_1=35.89$, which yields us $x_{\text{PA}}[n]=x[n]-0.06|x[n]|^{2}x[n]$.
respectively. The power of the PA's input signal is set at -7 dBm~\cite{minsoomag}.
Reflecting the nonlinear distortions in the SI channel estimation, \eqref{h_noise} can be rewritten as 
\begin{equation}
\label{h_noise2}
{\hat{H}_{\text{SI}}}[k] = {H_{\text{SI}}}[k]+\frac{{X_3}[k]{H_{\text{SI}}}[k]}{{X}[k]} +{N}[k],
\end{equation}
where ${X}[k]$ and ${X_3}[k]$ denote the frequency-domain gain of $\bold{x}$ and $\bold{x_3}$ for $k$-th subcarrier, respectively. The term $\frac{{X_3}[k]{H_{\text{SI}}}[k]}{{X}[k]}$ represents the channel estimation error induced by the PA nonlinearity. We derive the SIC performance with nonlinear PA by substituting ${N}[k]$ in~\eqref{eq.residual_q2} with $\frac{{X_3}[k]{H_{\text{SI}}}[k]}{{X}[k]}+{N}[k]$. For convenience, we define the following system parameters:
\begin{align}
\label{eq.powers}
p_1&\stackrel{\text{def}}{=} E\big[|{X}[k]|^2\big],\nonumber\\
p_2&\stackrel{\text{def}}{=} E\big[|{X}[k]|^4\big]\nonumber\\
p_3&\stackrel{\text{def}}{=} E\bigg[\frac{1}{|{X}[k]|^2}\bigg].
\end{align}

In the derivation of Theorem 1, we use the properties of $\bold{N}$, such that $\mathbb{E}[{N}[k]]=0$, and $\mathbb{E}[\bold{NN}^*]$ is a diagonal matrix where the diagonal elements are $\sigma^2$. These properties no longer hold after we consider the distortion induced by the PA. Instead, we utilized the following Lemma.
\begin{lem}
	Let ${N_{\text{PA}}}[k]\stackrel{\text{def}}{=}\frac{{X_3}[k]}{{X}[k]}$, then,
	\begin{align}
	\label{eq.lem_n2}
	m_1&\stackrel{\text{def}}{=}\mathbb{E}[{N_{\text{PA}}}[k]]=\psi_{3}p_1(2K-1)K, \nonumber\\ m_2\!&\stackrel{\text{def}}{=}\!\mathbb{E}[|{N_\text{PA}}[k]|^2], \nonumber\\
	&=\frac{\psi_{3}^2}{K^2}\left\{4(K\!-1)^2p_1+\!(4K\!-\!3)p_2+(K-1)p_2p_3\right.  \nonumber\\  m_3\!&\stackrel{\text{def}}{=}\!\mathbb{E}[{N_\text{PA}}[k_1]{N^*_\text{PA}}[k_2]], \ \  for \  k_1\neq k_2\nonumber \\
	&=\frac{\psi_{3}^2}{K^2}\left\{(4K^2-6K)p_1+4(K-1)p_2\right\}.
	\end{align}
%	then,
%	\begin{align}
%	\label{eq.lem_n2}
%	m_1=&\psi_{3}p_1(2K-1)K\approx 2\psi_3p_1, \nonumber\\
%	m_2=&
%	\frac{\psi_{3}^2}{K^2}\left\{4(K\!-1)^2p_1+\!(4K\!-\!3)p_2+(K-1)p_2p_3\right.  \nonumber\\
%	&\left.+2(K^2\!-\!3K\!+\!2)p_3\right\}\approx \psi_3^2(4p_1+2p_3), \nonumber\\
%	m_3=&\frac{\psi_{3}^2}{K^2}\left\{(4K^2-6K)p_1+4(K-1)p_2\right\}\approx 4\psi_3^2p_1. 
%	\end{align} 
\end{lem}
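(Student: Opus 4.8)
The plan is to first convert the time-domain product in the Hammerstein term into a frequency-domain convolution, and then reduce each moment to a finite sum over the data symbols $X[k]$ whose statistics are fixed by the OFDM model. Starting from $x_3[n]=\psi_3|x[n]|^2x[n]$ in \eqref{eq.pa} and substituting the IDFT of $x[n]$, the orthogonality relation $\sum_n e^{j2\pi\ell n/K}=K\,\delta_{\ell\equiv 0}$ collapses the resulting triple sum to a double sum, giving $X_3[k]\propto\sum_{m_1,m_2}X[m_1]X^*[m_2]X[k-m_1+m_2]$ (indices mod $K$). Dividing by $X[k]$ yields a closed expression for $N_{\mathrm{PA}}[k]$ as a normalized double sum in which the output index $k$ appears only in the denominator. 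I would treat the subcarrier symbols $X[k]$ as zero-mean, circularly symmetric, and independent across $k$, so that the only surviving statistics are $p_1,p_2,p_3$ of \eqref{eq.powers}; circular symmetry is precisely what makes most terms vanish.

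For $m_1$ I would take the expectation termwise. Writing $1/X[k]=X^*[k]/|X[k]|^2$, circular symmetry forces the ``phase charge'' to balance at every index, so the two unconjugated numerator indices $\{m_1,\,k-m_1+m_2\}$ must coincide as a multiset with the two conjugated indices $\{m_2,\,k\}$. This leaves exactly two matchings, $m_1=m_2$ (then $k-m_1+m_2=k$) and $m_1=k$ (then $k-m_1+m_2=m_2$); each contributes $K$ terms equal to $p_1$, and the single overlap $m_1=m_2=k$ is removed by inclusion--exclusion, producing the factor $2K-1$ and hence the stated $m_1$.

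For $m_2=\mathbb{E}[|N_{\mathrm{PA}}[k]|^2]$ I would expand $N_{\mathrm{PA}}[k]N_{\mathrm{PA}}^*[k]$ into a sixfold sum with an overall $1/|X[k]|^2$. Again circular symmetry requires the three unconjugated and three conjugated numerator indices to pair off into a perfect matching; the substantive bookkeeping is then to track how the denominator index $k$ interacts with these pairs. If no matched pair sits at $k$, the factor $1/|X[k]|^2$ is statistically isolated and contributes $p_3$ (yielding the $p_2p_3$-type terms, once a coincidence among the remaining pairs produces a $p_2$); if a matched pair sits exactly at $k$ it cancels $1/|X[k]|^2$ and no $p_3$ survives, giving the pure $p_1$- and $p_2$-terms. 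Counting the index assignments in each class as a polynomial in $K$, and using inclusion--exclusion to avoid double-counting degenerate coincidences, assembles the $K$-dependent coefficients. The computation of $m_3$ is identical in spirit, but with the two denominators at distinct indices $k_1\neq k_2$: rewriting $1/(X[k_1]X^*[k_2])$ shows the phase charges at $k_1$ and $k_2$ are now nonzero, so numerator factors are forced onto both indices and the isolated $1/|X|^2$ configuration never occurs. This is exactly why $p_3$ drops out of $m_3$, leaving only the $p_1$ and $p_2$ contributions.

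The main obstacle is the enumeration for $m_2$ and $m_3$: organizing all coincidence patterns among the six summed indices together with the fixed output index, attributing each surviving pattern to the correct product of $p_1,p_2,p_3$, and obtaining the multiplicities as functions of $K$ without double counting. Circular symmetry prunes this drastically by eliminating every unbalanced-phase term, so what remains is careful Kronecker-delta and inclusion--exclusion algebra rather than any delicate estimate. I would also flag that for non-Gaussian (e.g.\ QAM) constellations this step implicitly assumes the relevant higher phase moments vanish, i.e.\ that the symbols are circularly symmetric to the required order.
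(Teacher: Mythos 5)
Your strategy is the same as the paper's: pass to the frequency domain so that $X_3[k]$ becomes a double circular convolution $\frac{\psi_3}{K}\sum_{m,n}X[n]X^*[(n-m)_K]X[(k-m)_K]$, divide by $X[k]$, and evaluate each moment by classifying the $(m,n)$ (resp.\ $(m_1,n_1,m_2,n_2)$) index tuples according to which coincidences make the phase-unbalanced terms vanish under circular symmetry. Your $m_1$ computation is complete and correct: the two surviving families ($m_1=m_2$ and $m_1=k$, i.e.\ the paper's $S_1\cup S_2$ and $S_3$) minus the single overlap give $2K-1$ terms of value $p_1/K$, which reproduces the appendix's $\mathbb{E}[N_{\mathrm{PA}}[k]]=\psi_3 p_1(2K-1)/K$ (note this matches the derivation in Appendix~B, not the lemma statement's $\psi_3 p_1(2K-1)K$, which appears to be a typo). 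Your structural explanation of why $p_3$ survives in $m_2$ but not in $m_3$ --- the isolated $1/|X[k]|^2$ configuration is only possible when the two denominators sit on the same subcarrier --- is exactly the distinction the paper encodes in its classes $T_9,T_{10},T_{11}$ versus the $U_i$.

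The gap is that for $m_2$ and $m_3$ you describe the enumeration but do not perform it, and the entire content of those two formulas lies in the resulting multiplicities. The paper's proof spends its length precisely here: eleven disjoint classes $T_1,\dots,T_{11}$ for $k_1=k_2$ and eleven classes $U_1,\dots,U_{11}$ for $k_1\neq k_2$, each with an explicit cardinality as a polynomial in $K$ and an explicit moment value drawn from $\{p_1^2,p_2,p_1p_2p_3,p_1^3p_3\}/K^2$; summing these yields $(4K^2-10K+6)p_1^2+(4K-3)p_2+(K-1)p_1p_2p_3+2(K^2-3K+2)p_1^3p_3$ for $m_2$ and $(4K^2-6K)p_1^2+4(K-1)p_2$ for $m_3$. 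Without carrying out that bookkeeping your argument establishes only the \emph{form} of $m_2$ and $m_3$, not the stated coefficients, so it does not yet prove the lemma. Your caveat about higher-order circular symmetry is well taken and worth keeping: the paper kills the unbalanced terms by invoking only that $\mathrm{Re}\,X$ and $\mathrm{Im}\,X$ are i.i.d.\ zero-mean (so $\mathbb{E}[X^2]=0$), and the balanced sixth-order terms reduce to $p_1,p_2,p_3$, but verifying that no surviving class secretly requires a moment such as $\mathbb{E}[X^4]$ (which is nonzero for QAM) is part of making the enumeration rigorous.
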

\begin{proof}
	See Appendix-B.
\end{proof}
Based on Lemma 1, we derive the following:
\begin{theorem}
	The average power of the effective SI channel with the multi-tap circuit with $B$-bit phase shifters, attenuator stepsize $\delta$, and nonlinear power amplifier with 3-order nonlinear coefficient $\psi_3$,  
	$P^{q,\text{PA}}_{H_{\text{eff}}}$, can be rewritten as
	\begin{align}
	\label{eq.prop2}
	&P^{q,\text{PA}}_{H_{\text{eff}}} = \frac{1}{K}{\mathrm{tr}}[\bold{E_{H_{{\mathrm{SI}}}H_{{\mathrm{SI}}}^*}}]\nonumber \\
	+&\frac{(1\!\!+\!\!2m_1\!+\!m_3)(PA_1)^2\!-\!2(m_1\!+\!1)\!PA_1}{K}{\mathrm{tr}}[\bold{\Omega} \bold{R}^{-1}\bold{\Omega}^*\bold{E_{H_{{\mathrm{SI}}}H_{{\mathrm{SI}}}^*}}] \nonumber\\ 
	+&\frac{\left(m_2\!-\!m_3\right)(PA_1)^2}{K}{\mathrm{tr}}\left[\bold{\Omega} \bold{R}^{-1}\bold{\Omega}^*\bold{D(E_{H_{\mathrm{SI}}H_{\mathrm{SI}}^*})} \right] \nonumber \\
	+&\frac{\sigma^2}{K} \left[(PA_1)^2M+(A_2-(PA_1)^2)K{\mathrm{tr}}(\bold{R}^{-1}) \right] \nonumber \\
	+&\left(1\!\!+\!\!2m_1\!+\!m_3)(A_2-(PA_1)^2\right){\mathrm{tr}}\left[\bold{R}^{-1}\bold{\Omega}^*\bold{E_{H_{\mathrm{SI}}H_{\mathrm{SI}}^*}}\bold{\Omega} \bold{R}^{-1} \right] \nonumber \\
	+&\left(m_2\!-\!m_3)(A_2-(PA_1)^2\right){\mathrm{tr}}\left[\bold{R}^{-1}\bold{\Omega}^*\bold{D(E_{H_{\mathrm{SI}}H_{\mathrm{SI}}^*})}\bold{\Omega} \bold{R}^{-1} \right], 
	\end{align}
	where $\bold{D(E_{H_{\mathrm{SI}}H_{\mathrm{SI}}^*})}$ denotes the diagonal matrix whose diagonal elements are equal to that of $\bold{E_{H_{\mathrm{SI}}H_{\mathrm{SI}}^*}}$. The quantities  $P,A_1,A_2,m_1,m_2$ and $m_3$ are defined in~\eqref{eq.prop1} and~\eqref{eq.lem_n2}.
\end{theorem}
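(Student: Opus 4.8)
The plan is to repeat the derivation of Theorem~1 almost verbatim, changing only the statistics of the ``channel-estimation noise.'' Writing $\bold{A}=\bold{\Omega}\bold{Q}\bold{R}^{-1}\bold{\Omega}^*$ and, following \eqref{h_noise2}, collecting the PA distortion and the thermal noise into a single effective-noise vector $\bold{\tilde N}$ with entries $\tilde N[k]={N_{\text{PA}}}[k]\,{H_{\text{SI}}}[k]+{N}[k]$, the residual effective channel becomes $\bold{e}=(\bold{I}-\bold{A})\bold{H_{\text{SI}}}-\bold{A}\bold{\tilde N}$, so that $P^{q,\text{PA}}_{H_{\text{eff}}}=\tfrac{1}{K}\mathbb{E}[\mathrm{tr}(\bold{e}\bold{e}^*)]$. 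First I would expand this trace into a channel term $\mathrm{tr}((\bold{I}-\bold{A})\bold{H_{\text{SI}}}\bold{H_{\text{SI}}}^*(\bold{I}-\bold{A})^*)$, a cross term involving $\bold{H_{\text{SI}}}\bold{\tilde N}^*$ (entering with a factor $-2$ since it and its conjugate are equal and real), and a quadratic noise term involving $\bold{\tilde N}\bold{\tilde N}^*$, and take the expectation over the four mutually independent sources of randomness: the quantization matrix $\bold{Q}$, the thermal noise $\bold{N}$, the PA distortion $\bold{N_{\text{PA}}}$, and the channel $\bold{H_{\text{SI}}}$.

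The conceptual crux is that ${N_{\text{PA}}}[k]={X_3}[k]/{X}[k]$ is a function of the transmitted data only, hence independent of the channel realization $\bold{H_{\text{SI}}}$ and of $\bold{N}$. Using this independence together with Lemma~1, I would evaluate the two moment matrices that replace the simple moments $\mathbb{E}[\bold{N}]=0$, $\mathbb{E}[\bold{N}\bold{N}^*]=\sigma^2\bold{I}$ used in Theorem~1. The cross-correlation factors as $\mathbb{E}[\bold{H_{\text{SI}}}\bold{\tilde N}^*]=m_1\,\bold{E_{H_{\mathrm{SI}}H_{\mathrm{SI}}^*}}$ (recall $m_1$ is real), and the effective-noise correlation becomes $\mathbb{E}[\bold{\tilde N}\bold{\tilde N}^*]=\sigma^2\bold{I}+m_3\,\bold{E_{H_{\mathrm{SI}}H_{\mathrm{SI}}^*}}+(m_2-m_3)\,\bold{D(E_{H_{\mathrm{SI}}H_{\mathrm{SI}}^*})}$. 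The diagonal matrix $\bold{D(\cdot)}$ and the coefficient $(m_2-m_3)$ appear precisely because $\mathbb{E}[|{N_{\text{PA}}}[k]|^2]=m_2$ on the diagonal ($k_1=k_2$) but $\mathbb{E}[{N_{\text{PA}}}[k_1]{N^*_{\text{PA}}}[k_2]]=m_3$ off the diagonal, so writing $m_3$ times the full matrix and correcting the diagonal by $(m_2-m_3)$ reproduces both.

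With these two matrices in hand, I would recycle the quantization-averaging identities established in Appendix-A, namely $\mathbb{E}_{\bold{Q}}[\bold{Q}]=PA_1\bold{I}$ and, for any Hermitian $\bold{M}$, the identity $\mathbb{E}_{\bold{Q}}[\mathrm{tr}(\bold{A}\bold{M}\bold{A}^*)]=(PA_1)^2\mathrm{tr}(\bold{\Omega}\bold{R}^{-1}\bold{\Omega}^*\bold{M})+(A_2-(PA_1)^2)K\,\mathrm{tr}(\bold{\Omega}(\bold{R}^{-1})^2\bold{\Omega}^*\bold{M})$, the latter resting on the split of $\mathbb{E}[Q_{ii}Q_{jj}^*]$ into $A_2$ for $i=j$ and $(PA_1)^2$ for $i\neq j$, and on the fact that every diagonal entry of $\bold{R}$ equals $K$. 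Applying this with $\bold{M}\in\{\bold{E_{H_{\mathrm{SI}}H_{\mathrm{SI}}^*}},\,\bold{D(E_{H_{\mathrm{SI}}H_{\mathrm{SI}}^*})},\,\bold{I}\}$: the channel term reproduces the $\tfrac{1}{K}\mathrm{tr}(\bold{E_{H_{\mathrm{SI}}H_{\mathrm{SI}}^*}})$ and the $[(PA_1)^2-2PA_1]$ pieces of Theorem~1; the $\sigma^2\bold{I}$ part of $\mathbb{E}[\bold{\tilde N}\bold{\tilde N}^*]$ reproduces the $\sigma^2$ line verbatim; and the $m_1$ cross term, the $m_3\bold{E_{H_{\mathrm{SI}}H_{\mathrm{SI}}^*}}$ part, and the channel term combine coefficient-by-coefficient into the factors $(1+2m_1+m_3)$ multiplying the $\bold{E_{H_{\mathrm{SI}}H_{\mathrm{SI}}^*}}$-traces, while the $(m_2-m_3)\bold{D(E_{H_{\mathrm{SI}}H_{\mathrm{SI}}^*})}$ part produces the two $\bold{D(\cdot)}$-traces, yielding \eqref{eq.prop2}.

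The main obstacle I anticipate is the bookkeeping rather than any single hard inequality: I must track three scalar prefactors ($m_1$ from the cross term, $m_3$ from the off-diagonal noise correlation, and $m_2-m_3$ from the diagonal correction) through each elementary trace and verify that they collapse into the stated $(1+2m_1+m_3)$ and $(m_2-m_3)$ groupings. A sign slip in the cross term, or a mismatch between $\mathrm{tr}(\bold{\Omega}(\bold{R}^{-1})^2\bold{\Omega}^*\bold{M})$ and its cyclic rewrite $\mathrm{tr}(\bold{R}^{-1}\bold{\Omega}^*\bold{M}\bold{\Omega}\bold{R}^{-1})$ used in \eqref{eq.prop2}, would be the easiest way to go wrong. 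As a final consistency check I would set $m_1=m_2=m_3=0$ and confirm that \eqref{eq.prop2} collapses to \eqref{eq.prop1}, recovering the ideal-PA case.
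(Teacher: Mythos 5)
Your proposal is correct and follows essentially the same route as the paper: your effective-noise vector $\bold{\tilde N}$ is exactly the paper's $\bold{G}=\bold{N_{\text{PA}}}\circ\bold{H_{\text{SI}}}+\bold{N}$, your two moment matrices $\mathbb{E}[\bold{H_{\text{SI}}}\bold{\tilde N}^*]=m_1\bold{E_{H_{\mathrm{SI}}H_{\mathrm{SI}}^*}}$ and $\mathbb{E}[\bold{\tilde N}\bold{\tilde N}^*]=\sigma^2\bold{I}+m_3\bold{E_{H_{\mathrm{SI}}H_{\mathrm{SI}}^*}}+(m_2-m_3)\bold{D(E_{H_{\mathrm{SI}}H_{\mathrm{SI}}^*})}$ coincide with the paper's key substitution step, and the remainder is the same reuse of the Appendix-A quantization identities. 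The only cosmetic difference is that the paper tracks the modified terms $(c'),(d'),(e')$ individually rather than folding the cross terms into a single factor of $-2$ times a real trace.
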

\begin{proof}
	See Appendix-C.
\end{proof}	

\begin{figure*}[t]
	\centerline{\resizebox{2\columnwidth}{!}{\includegraphics{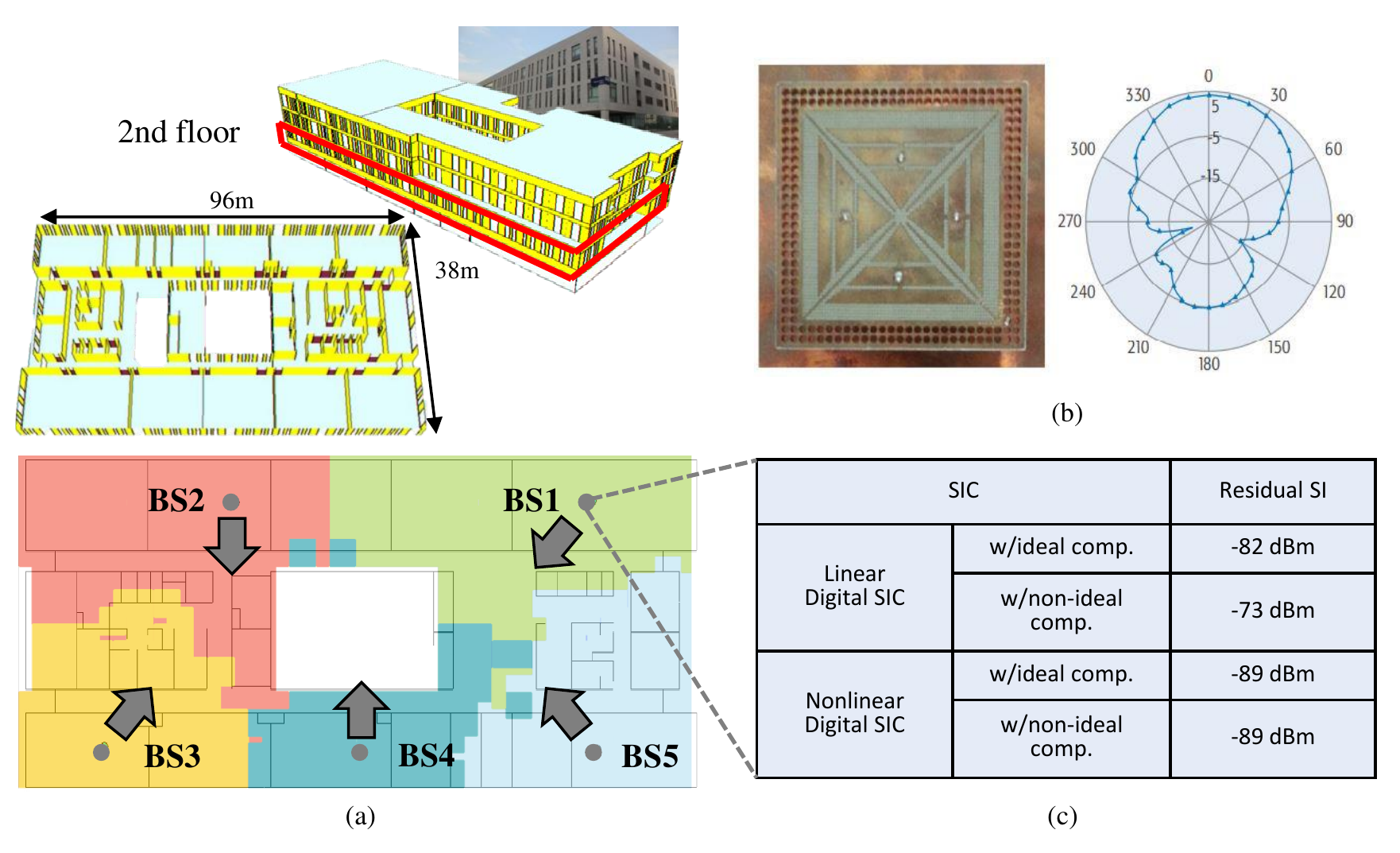}}}
	\caption{(a) The modeled building structure and the base station (BS) deployments for the system-level throughput analysis. (b) The measured radiation pattern of the adopted dual-polarized antenna. (c) The link-level SIC simulation results for BS1. Note that the noise floor is -90~dBm. }
	\label{SLS}
\end{figure*}
\begin{figure}[t]
	\centerline{\resizebox{\columnwidth}{!}{\includegraphics{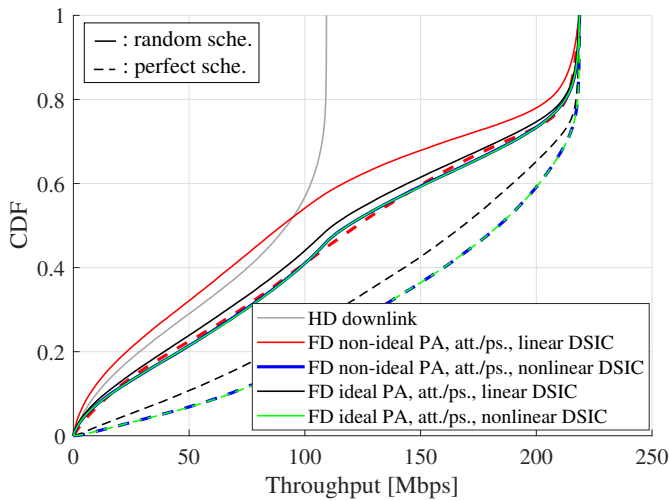}}}
	\caption{The CDFs of the system throughput values.}
	\label{cdf}
\end{figure}

In Lemma 1, $m_1,m_2$, and $m_3$ are the parameters associated with the nonlinear distortions induced by PA. Note that these parameters are expressed in terms of the 3-order gain of the PA, $\psi_3$, and the system parameters $p_1,p_2,p_3$, and $K$.   
Fig.~\ref{PA_nonlin_MS_vs_lin} shows a comparison between the SIC performance with the ideal and practical PAs.
We compare the following three scenarios: i) ideal PA, attenuators and phase shifters, ii) ideal PA, non-ideal attenuators and phase shifters ($\delta=0.01$ dB, $B=10$ bits), and iii) non-ideal PA, attenuators and phase shifters ($\delta=0.01$ dB, $B=10$ bits). 
We observe the notable differences between the SIC performances of the case ii) and iii) (i.e., the blue line and the green line, respectively). To analyze the impact of the non-idealities of the PA and the phase shifters/attenuators separately, we derive the following corollary: 
\begin{corollary}
	Even if the power of noise $\sigma^2$ is zero, and the time delays of the multi-tap circuit perfectly match that of the SI channel, the SI still remains due to the PA's nonlinear distortions and the quantization errors induced by the phase shifters and attenuators, where the minimum average power of the effective SI channel with the multi-tap circuit is 
	\begin{align}
	\label{eq.cor3}
		\tilde{P}^{q,\text{PA}}_{H_{\text{eff}}} &= \tilde{P}^{q}_{H_{\text{eff}}} -
		{\mathrm{tr}}( \bold{E_{H_{{\mathrm{SI}}}H_{{\mathrm{SI}}}^*}})\!\times\!  \Big(\!\frac{2m_1(PA_1\!-\!A_2)\!-\!m_3A_2}{K^2}\!\Big) \nonumber \\
	&+ {\mathrm{tr}}(\bold{\Omega} \bold{\Omega}^* \bold{D(E_{H_{\mathrm{SI}}H_{\mathrm{SI}}^*})})\!\times\!
	\Big(\!\frac{A_2(m_2-m_3)}{K^2}\!\Big).
	\end{align}
\end{corollary}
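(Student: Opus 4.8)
The plan is to specialize the general expression for $P^{q,\text{PA}}_{H_{\text{eff}}}$ in Theorem~2 to the idealized operating point $\sigma^2=0$ and $\tau_i=iT$, and then to recognize the quantization-only baseline $\tilde{P}^{q}_{H_{\text{eff}}}$ of Corollary~2 inside the result, so that what remains isolates the pure PA contribution. First I would impose $\tau_i=iT$. As in the proof of Corollary~2, this makes the columns of $\bold{\Omega}$ orthogonal, so $\bold{R}=\bold{\Omega}^*\bold{\Omega}=K\bold{I_M}$, hence $\bold{R}^{-1}=\frac{1}{K}\bold{I_M}$ and $\bold{\Omega}\bold{R}^{-1}\bold{\Omega}^*=\frac{1}{K}\bold{\Omega}\bold{\Omega}^*$ becomes the (scaled) orthogonal projection onto the range of the delay dictionary. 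Setting $\sigma^2=0$ then annihilates the entire fourth line of Theorem~2. I would substitute $\bold{R}^{-1}=\frac{1}{K}\bold{I_M}$ into each remaining trace and repeatedly use the cyclic invariance of the trace to turn every $\bold{R}^{-1}\bold{\Omega}^*(\cdots)\bold{\Omega}\bold{R}^{-1}$ into a single $\bold{\Omega}\bold{\Omega}^*(\cdots)$ trace.

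The decisive structural point is which of these traces collapses further. For the terms carrying $\bold{E_{H_{\mathrm{SI}}H_{\mathrm{SI}}^*}}$ I would invoke the identity used in Corollary~2 (from~\cite{theoretical}): when $M\ge L_f$, the column space of $\bold{H_{\mathrm{SI}}}$ lies in the range of the projection, so ${\mathrm{tr}}(\bold{\Omega}\bold{R}^{-1}\bold{\Omega}^*\bold{E_{H_{\mathrm{SI}}H_{\mathrm{SI}}^*}})={\mathrm{tr}}(\bold{E_{H_{\mathrm{SI}}H_{\mathrm{SI}}^*}})$, equivalently ${\mathrm{tr}}(\bold{\Omega}\bold{\Omega}^*\bold{E_{H_{\mathrm{SI}}H_{\mathrm{SI}}^*}})=K\,{\mathrm{tr}}(\bold{E_{H_{\mathrm{SI}}H_{\mathrm{SI}}^*}})$. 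The terms carrying the diagonal matrix $\bold{D(E_{H_{\mathrm{SI}}H_{\mathrm{SI}}^*})}$, however, do \emph{not} enjoy this property, since the projection identity is specific to the SI-channel covariance and not to its diagonal part; those traces must therefore be retained as ${\mathrm{tr}}(\bold{\Omega}\bold{\Omega}^*\bold{D(E_{H_{\mathrm{SI}}H_{\mathrm{SI}}^*})})$. Correctly separating these two families is the heart of the argument.

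Finally I would collect coefficients. Grouping the ${\mathrm{tr}}(\bold{E_{H_{\mathrm{SI}}H_{\mathrm{SI}}^*}})$-proportional contributions (the first, second, and fifth lines of Theorem~2) and checking that the $(PA_1)^2$ parts cancel leaves the factor $(1-2PA_1+A_2)$, which is exactly $\tilde{P}^{q}_{H_{\text{eff}}}$ of Corollary~2, together with residual PA terms in $m_1$ and $m_3$; grouping the two $\bold{D(E_{H_{\mathrm{SI}}H_{\mathrm{SI}}^*})}$ contributions (the third and sixth lines) makes their $(PA_1)^2$ and $A_2-(PA_1)^2$ coefficients combine into a single $A_2(m_2-m_3)$, recovering the last term of the claim.

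The main obstacle I anticipate is purely bookkeeping in the powers of $K$: because $\bold{R}^{-1}$ injects a factor $1/K$ each time it appears (once in the projection-type terms, twice in the quadratic terms) while the projection identity supplies a compensating factor $K$, keeping the $K$-exponents consistent across all six terms is precisely where sign and normalization slips are most likely, and is what must be tracked carefully to land on the stated form of the corollary.
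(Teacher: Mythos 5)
Your proposal is correct and follows essentially the same route as the paper's Appendix-D proof: substitute $\bold{R}=K\bold{I_M}$ and $\sigma^2=0$ into Theorem~2, apply the projection identity $\bold{\Omega}\bold{R}^{-1}\bold{\Omega}^*\bold{E_{H_{\mathrm{SI}}H_{\mathrm{SI}}^*}}=\bold{E_{H_{\mathrm{SI}}H_{\mathrm{SI}}^*}}$ only to the covariance terms (not to the $\bold{D(E_{H_{\mathrm{SI}}H_{\mathrm{SI}}^*})}$ terms), and then regroup to expose $\tilde{P}^{q}_{H_{\text{eff}}}$ from Corollary~2. Your emphasis on which traces collapse and on tracking the powers of $K$ is exactly the substance of the paper's argument.
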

\begin{proof}
	See Appendix-D.
\end{proof}
In~\eqref{eq.cor3}, we present the minimum average power of the effective SI channel with the non-ideal PA, phase shifters, and attenuators, $\tilde{P}^{q,\text{PA}}_{H_{\text{eff}}}$. Note that $\tilde{P}^{q,\text{PA}}_{H_{\text{eff}}}$ consists of $\tilde{P}^{q}_{H_{\text{eff}}}$, and the terms associated with the nonlinear distortions induced by PA.    
Based on~\eqref{eq.cor2} and~\eqref{eq.cor3}, we depict the achievable SIC of the multi-tap circuit with ideal and non-ideal PA in Fig.~\ref{mesh_lin} and Fig.~\ref{mesh_nonlin}, respectively.

\begin{table}
	\caption{Simulation parameters}
	\label{table_1}
	\begin{center}
		\begin{tabular}{ccc}
			\hline\hline
			System Parameter &Notation &Values \\
			\hline
			Center frequency & &2.52GHz\\
			Bandwidth & &20MHz\\		
			Modulation & &64QAM\\
			FFT size &&2048\\
			Used subcarrier &&1200\\
			CP length&&512\\
			Tx power&&23dBm\\
			Power amplifier gain && 30dB\\		
			Received noise floor&&-90dBm\\
			Phase shifter bits &$B$&8bit\\
			Attenuator step size &$\delta$& 0.1dB\\
			ADC bits &&14bit\\
			\hline \hline
		\end{tabular}
	\end{center}
\end{table}

\section{System Level Throughput Analysis}
\label{sec_4}
In this section, we provide system-level throughput evaluations of a full-duplex system equipped with analog SIC via multi-tap circuit in indoor multi-cell environments.
In Fig.~\ref{SLS}(a), we illustrate the modeled building structure and BS deployments. We deployed the five BSs equipped with dual-polarized antennas~\cite{prototyping,dualpole}. The adopted antenna  has high cross-polarization discrimination (XPD) characteristic and it achieves 40 dB of isolation.
For the indoor channel modeling, we utilized the 3D ray-tracing tool developed by Bell Labs~\cite{3d-smallcell,3d-vtc,yg}, Wireless System Engineering (WiSE). The measured radiation pattern of the dual-polarized antenna was employed in the 3D ray-tracing. 
We calculated the power-delay-profile (PDP)
of the SI channel for each BS through 3D ray-tracing and reflected it in the link-level SIC simulations. Through 3D ray-tracing, we observed that the time delays of the SI channel of BS1 are distributed in the range of 5 ns to 40 ns, where the gain of the direct path is about -40 dB. We adopt the multi-tap circuit which has 4-taps with the time delays evenly spaced in the range of 5 ns to 40 ns. In Table 1, we present the simulation parameters for the performance evalutations.

\begin{figure*}[t]
	\centerline{\resizebox{2\columnwidth}{!}{\includegraphics{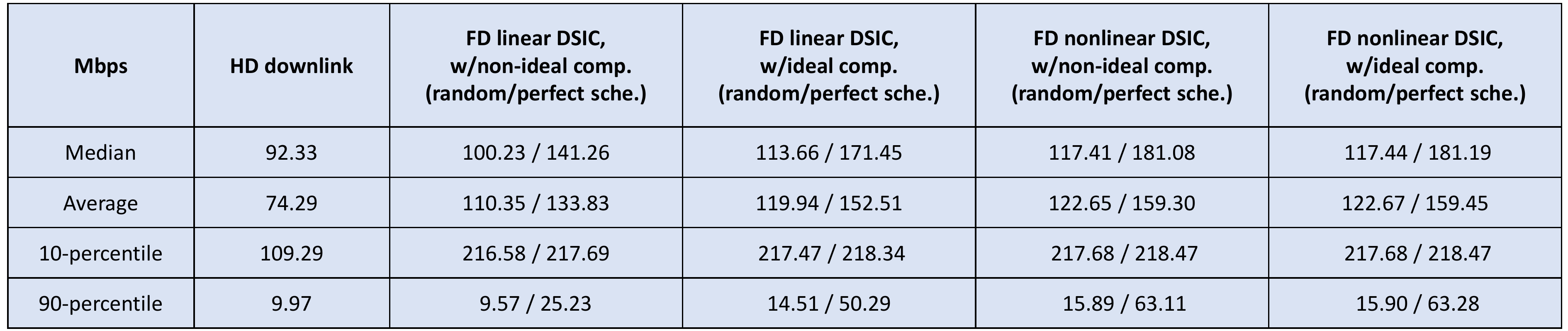}}}
	\caption{The representative system throughput values of the half-duplex system and the full-duplex system with different SIC scenarios.}
	\label{throughput_values}
\end{figure*}
In Fig.~\ref{SLS}(c), we present the link-level SIC results for BS 1. We adopt the two different SIC scenarios for both the analog and digital SIC. For the analog SIC, we consider the ideal and non-ideal RF components (i.e., 8-bit phase shifter, the attenuator with stepsize 0.1~dB, and the PA characteristics introduced in~\cite{minsoomag}) in the simulations.  
The residual SI is once again mitigated by the linear/nonlinear digital SIC algorithms. 
The linear digital SIC algorithm cancels the residual SI in the frequency domain by the least-square method.  
The nonlinear digital SIC algorithm herein refers to the algorithm proposed in~\cite{nonlinDSIC}, which estimates and cancels the SI based on the parallel Hammerstein model. Based on~\eqref{eq.pa}, the received signal is modeled as
\begin{align}
	\label{eq.pa1}
	y[n]&=\sum_{\ell=0}^{L-1}{h_{\text{SI}}}[\ell]x_{\text{PA}}[n-\ell]+z[n] ] \nonumber   \\
	&=\sum_{p=0}^{P-1} \ \sum_{\ell=0}^{L-1}b_{p,\ell}f_{p}(x[n-\ell])+z[n],
\end{align} 
where $y[n]$ and $z[n]$ are the received signal and noise at time~$n$, ${h_{\text{SI}}}$ is the residual SI channel after analog SIC, $L$ is the number of taps of the residual SI channel, $\{b_{p,\ell}\}$ are the effective nonlinear coefficients and $f_{p}(x[n])=|x[n]|^{2p}x[n]$.
We estiamte the effective nonlinear coefficients, $\{b_{p,\ell}\}$, by transmitting a preamble. The highest order of the parallel Hammerstein model ($2P\!-\!1$), and the number of taps of the residual SI channel ($L$) are set to 3 and 8, respectively. Let $S$ denotes the number of observed samples, then,
the least-square estimation of the effective nonlinear coefficients, $\pmb{\hat{b}}$, can be obtained as follows:
\begin{align}
	\pmb{\hat{b}}&=\underset{\pmb{b}}{\argmin}||\pmb{y}-\pmb{f}\pmb{b}||^2 \nonumber\\
	&=(\pmb{f}^{*}\pmb{f})^{-1}\pmb{f}^{*}\pmb{y}.
\end{align}
where
\begin{gather}
	\pmb{b}=[b_{0,0} \ b_{0,1} .. b_{0,L-1} .. b_{P\!-\!1,0} \ b_{P\!-\!1,1} .. b_{P\!-\!1,L-1}],\nonumber \\
	\pmb{f}=[\pmb{f}_{0} \ \pmb{f}_{1} \ ..\pmb{f}_{P-1}],  \nonumber\\
	\pmb{f}_{p}=\setlength\arraycolsep{0.8pt}\begin{bmatrix}f_{p}({x}[n])&f_{p}({x}[n-1])&\cdots&f_{p}({x}[n-\!L+1])\\f_{p}({x}[n+1])&f_{p}({x}[n])&\cdots&f_{p}({x}[n\!-\!L\!+\!2])\\\vdots&\vdots&\ddots&\vdots\\f_{p}({x}[n\!+\!S\!-\!1])&f_{p}({X}[n\!+\!S\!-\!2])&\cdots&f_{p}({x}[n\!-\!L\!+\!S])\end{bmatrix}.  
\end{gather}

With the linear digital SIC, the total SIC performance degraded by 9~dB taking into account the non-idealities of the RF components (i.e., the residual SI power increases -82~dBm to -73 dBm). On the other hand, the nonlinear digital SIC algorithm cancels the SI close to the noise floor regardless the analog SIC performance degradation due to the non-idealities of the RF components.
This happens because the linear digital SIC cannot suppress the nonlinear distortion which mainly comes from the PA nonlinearity. The cancellation of the nonlinear distortion is carried by the analog SIC and the nonlinear digital SIC. 
The multi-tap circuit is able to cancel the nonlinear distortions since the circuit takes the PA's output as an input. 
With the practical RF components the necessity of the nonlinear digital SIC is emphasized as it can make up the cancellation of the nonlinear distortions.
Note that even if the RF components are assumed to be ideal, the nonlinear digital SIC shows better performance than the linear digital SIC. It is because the linear digital SIC algorithm estimates $K$ channel coefficients in the frequency domain, while the nonlinear digital SIC algorithm estimates only $L$ effective nonlinear coefficients in the time domain. The comparison of the accuracies of the time domain and frequency domain least-square estimates of the channel response in OFDM system is presented in~\cite{time_vs_freq}. 

Fig.~\ref{cdf} depicts the ergodic throughputs of the half-duplex and full-duplex system with the four different SIC scenarios.
For each SIC scenario, the solid line in Fig.~\ref{cdf} corresponds to the cumulative distribution function (CDF) of the system throughputs with random user equipment (UE) selection. The dashed lines are correspond to the CDF of the system throughputs without UE-to-UE interference, which can be achieved by the perfect UE selection. 
Note that the upper bounds of the system throughputs are indicated by no UE-to-UE interference cases. 
%The gap between the cyan lines and the green lines show the system throughput degradation due to the limited resolution of the attenuators and phase shifters. 
  
With the non-ideal RF components and linear digital SIC (the solid red line), the average throughput is lower than that of the half-duplex system when UE-to-UE interference exists.
The gap between the red and black lines can be interpreted as the impacts of residual nonlinear SI components, which cannot be mitigated by the linear digital SIC. 
With the nonlinear digitla SIC (i.e., the blue and green lines), the SI is suppressed close to the receiver noise floor. 

The system throughput values for the half-duplex system and the full-duplex system with different SIC scenarios are represented in Fig.~\ref{throughput_values}.
Considering UE-to-UE interference, the average throughputs of the full duplex system with ideal/non-ideal RF components and linear digital SIC (the solid red line and the solid blue line, respectively) increased by 61 and 48 percent over that of the half duplex system, respectively. 
The overall results make clear that the non-idealities of the RF components significantly affect the analog SIC performance, which puts a burden on the digital SIC.

\section{{\fontsize{11}{14}\selectfont Concluding Remarks}}
\label{Sec.conclusion}

In this paper, we investigated the analog SIC performance of the multi-tap circuit with practical RF components. We theoretically derived the SIC performance considering the resolution of the attenuators and phase shifters and the PA nonlinearity. We have verified the derived formula through link-level SIC simulations.
For the realistic performance analysis, we employed measured PA characteristics. We carried system-level throughput evaluations in an interference-limited environment via 3D ray-tracing.
Our results manifest that we should consider the limited resolution of the attenuators and phase shifters in the design of analog SIC.

\appendices

\section*{Acknowledgment}

\section{Proof of Theorem 1}
Using the facts that i) $\text{tr}(\cdot)$ and $\mathbb{E}$ are commutative and ii) $\text{tr}(AB)=\text{tr}(BA)$, we can expand \eqref{eq.residual_q} as 
\begin{align}
\label{eq.residual_q2}
	P^q_{H_{\text{eff}}} &= \frac{1}{K}\text{tr}(\mathbb{E}[\bold{H_\text{SI}}\bold{H_\text{SI}}^*]) -\underbrace{\frac{2}{K}\text{tr}(\mathbb{E}[\bold{\Omega} \bold{Q}\bold{R}^{-1}\bold{\Omega}^* \bold{H_{\text{SI}}}\bold{H_{\text{SI}}}^*])}_{(a)}\nonumber \\
&+\underbrace{\frac{1}{K}\text{tr}(\mathbb{E}[\bold{\Omega} \bold{R}^{-1}\bold{Q}^*\bold{R}\bold{Q}\bold{R}^{-1}\bold{\Omega}^*\bold{H_{\text{SI}}}\bold{H_{\text{SI}}}^*])}_{(b)} \nonumber \\
&+\underbrace{\frac{1}{K}\text{tr}(\mathbb{E}[\bold{\Omega} \bold{R}^{-1}\bold{Q}^*\bold{R}\bold{Q}\bold{R}^{-1}\bold{\Omega}^*\bold{NN^*}])}_{(c)} \nonumber \\
&+\underbrace{\frac{1}{K}\text{tr}(\mathbb{E}[\bold{\Omega} \bold{R}^{-1}\bold{Q}^*\bold{R}\bold{Q}\bold{R}^{-1}\bold{\Omega}^*(\bold{NH^*_\text{SI}+\bold{H_\text{SI}}N^*)}])}_{(d)} \nonumber \\
&-\underbrace{\frac{2}{K}\text{tr}(\mathbb{E}[\bold{\Omega} \bold{Q}\bold{R}^{-1}\bold{\Omega}^* \bold{N}\bold{H_{\text{SI}}}^*])}_{(e)}.
\end{align} 
Since $\mathbb{E}[{N}[k]]=0$, the last two terms, $(d,e)$ are zero.
Note that the elements in $\bold{N}$ and $\bold{Q}$ are independent, and all the other matrices in the three expectation terms in~\eqref{eq.residual_q2} are deterministic. Therefore, the term $(a)$ in~\eqref{eq.residual_q2} is equal to 
\begin{align}
\label{eq.term_a}
(a) &= \frac{2\mathbb{E}[{Q}_{ii}]}{K}\text{tr}(\bold{\Omega} \bold{R}^{-1}\bold{\Omega}^* \mathbb{E}[\bold{H_{\text{SI}}}\bold{H_{\text{SI}}}^*]). 
\end{align} 
To simplify the terms $(b)$ and $(c)$, we first derive the following property. 
For the deterministic matrices $\bold{A}$, $\bold{B}$ and the quantization matrix $\bold{Q}$,
\begin{align}
\label{eq.lemmaprof}
&\text{tr}\left(\mathbb{E}[\bold{Q^*AQB}]\right)\nonumber\\ &=\mathbb{E}\left[\sum_{i=1}^{M}\sum_{j=1}^{M}\bold{Q}^*_{ii}{Q}_{jj}\bold{A}_{ij}\bold{B}_{ji}\right] \nonumber \\
%&=\mathbb{E}[\sum_{i=1}^{M}\sum_{j=1}^{M}10^{(n_{a,i}+n_{a,j})/20}e^{j(n_{p,i}-n_{p,j})}A_{ij}B_{ji}] \nonumber \\
&=\mathbb{E}\left[\sum_{i=1}^{M}|{Q}_{ii}|^2{A}_{ii}{B}_{ii}+\sum_{i\neq j, i=1}^{M}\sum_{j=1}^{M}\bold{Q}^*_{ii}{Q}_{jj}\bold{A}_{ij}\bold{B}_{ji}\right] \nonumber \\
&=\mathbb{E}[|{Q}_{ii}|^2]\!\!\left(\sum_{i=1}^{M}{A}_{ii}{B}_{ii}\right)+\mathbb{E}[\bold{Q}^*_{ii}{Q}_{jj}]\!\!\left(\sum_{i\neq j, i=1}^{M}\sum_{j=1}^{M}\bold{A}_{ij}\bold{B}_{ji}\right) \nonumber \\
&=\left(\mathbb{E}[|{Q}_{ii}|^2]\!\!-\!\!\mathbb{E}[{Q}_{ii}]^2\right)\!\!\left(\sum_{i=1}^{M}{A}_{ii}{B}_{ii}\right)\!+\mathbb{E}[{Q}_{ii}]^2\mathrm{tr}(\bold{AB}). 
\end{align}

Note that $\mathbb{ E}[|{Q}_{ii}|^2]$ and $\mathbb{E}[{Q}_{ii}]$ can be calculated as  
\begin{align}
\mathbb{E}[{Q}_{ii}]&= \mathbb{E}[10^{(n_{a,i})/20}]\mathbb{E}[e^{-jn_{p,i}}]\nonumber\\
&=\frac{20}{\delta \ln(10)} (10^{\frac{\delta}{40}} - 10^{\frac{-\delta}{40}})\int_{-\frac{\pi}{2^{B}}}^{\frac{\pi}{2^{B}}}\frac{2^B}{2\pi}e^{-jx}dx \nonumber \\
&=\underbrace{\frac{20}{\delta \ln(10)} (10^{\frac{\delta}{40}} - 10^{\frac{-\delta}{40}})}_{A_1}\underbrace{\frac{2^B}{\pi} \sin(\frac{\pi}{2^B})}_{P}, \\
\mathbb{E}[|{Q}_{ii}|^2]&=\mathbb{E}[10^{n_{a,i}/10}] \nonumber\\ 
&=\int_{-\delta/2}^{\delta/2}\frac{1}{\delta}10^{x/10}dx \nonumber\\
&=\underbrace{\frac{10}{\delta \ln(10)} (10^{\frac{\delta}{20}} - 10^{\frac{-\delta}{20}})}_{A_2}.
\end{align}
In (26) and (27), we define three quantities, $(A_1, A_2) \ \text{and} \ P$, which are related to the attenuator stepsize and phase shifter bits, respectively. We can rewrite~\eqref{eq.lemmaprof} as 
\begin{align}
\label{eq.lemma2}
&\text{tr}(\mathbb{E}[\bold{Q^*AQB}]) \!\!=\!\!\left(A_2\!\!-\!\!P^2A_1^2\right)\!\!\left(\sum_{i=1}^{M}{A}_{ii}{B}_{ii}\right)\!+P^2A_1^2\mathrm{tr}(\bold{AB}). 
\end{align}
Using this property and the fact that $R_{ii}=K$, we obtain the followings:
\begin{align}
%\label{eq.separation}
(a) &=\frac{2PA_1}{K}\text{tr}(\bold{\Omega} \bold{R}^{-1}\bold{\Omega}^* \bold{E_{H_{\mathrm{SI}}H_{\mathrm{SI}}^*}}), \nonumber\\
(b) &= \frac{1}{K}[\text{tr}(\mathbb{E}[\bold{Q^*RQ}(\bold{R}^{-1}\bold{\Omega}^*\bold{H_{\text{SI}}}\bold{H_{\text{SI}}}^*\bold{\Omega} \bold{R}^{-1})])] \nonumber\\
%&=\frac{1}{K}\left[P^2A_1^2\text{tr}(\bold{\Omega} \bold{R}^{-1}\bold{\Omega}^*\bold{H_{\text{SI}}}\bold{H_{\text{SI}}}^* )\vphantom{\sum_{1}^{2}}\right.\nonumber \\
%& \ \left.+K(A_2-P^2A_2^2)\sum_{i=1}^{M}\left(R_{ii}(\bold{R}^{-1}\bold{\Omega}^*\bold{H_{\text{SI}}}\bold{H_{\text{SI}}}^*\bold{\Omega} \bold{R}^{-1})_{ii}\right)\right] \nonumber\\
&=\frac{1}{K}\left[P^2A_1^2\text{tr}(\bold{\Omega} \bold{R}^{-1}\bold{\Omega}^*\bold{E_{H_{\mathrm{SI}}H_{\mathrm{SI}}^*}} )\right.\nonumber \\
& \ \left.+(A_2-P^2A_2^2)\text{tr}(\bold{\Omega} (\bold{R}^{-1})^2\bold{\Omega}^*\bold{E_{H_{\mathrm{SI}}H_{\mathrm{SI}}^*}})\right], \nonumber\\
(c) &=\frac{\sigma^2}{K}\left[\text{tr}(\mathbb{E}[\bold{Q^*RQ}\bold{R}^{-1}])\right] \nonumber \\
%&=\frac{\sigma^2}{K}\left[P^2A_1^2M+(A_2-P^2A_1^2)\sum_{i=1}^{M}\left(R_{ii}\bold{R}^{-1}_{ii}\right)\right] \nonumber \\
&=\frac{\sigma^2}{K}[P^2A_1^2M+K(A_2-P^2A_1^2){\mathrm{tr}}(\bold{R}^{-1})]. 
\end{align} 
We get~\eqref{eq.prop1} by integrating $(a),(b)$ and $(c)$.
\section{Proof of Lemma 1}
Since $x _3[n]$ is $\psi_3{X}[n]{X}^*[n]{X}[n]$, we can rewrite $\mathbb{E}[{N_{\text{PA}}}[k_1]]$ as follows, by using the properties of discrete Fourier transform (DFT).  
\begin{align}
\label{X_3}
&\mathbb{E}[{N_{\text{PA}}}[k_1]]\\ \nonumber
\!&=\!\mathbb{E}[\frac{\psi_3}{{X}[k_1]K}\!\sum_{m=0}^{K-1}\!\{\!\sum_{n=0}^{K-1}{X}[n]{X}^*[(n\!-\!m)_K]{X}[(k_1\!-\!m)_K]\}],
%	\!&=\!\!\!\!\!\!\!\!\!\!\!\!\sum_{(m,n)\in S_{1}, S_2, \cdots S_5}\!\!\!\!\!\!\!\!\!\!\!\!\mathbb{E}[\frac{\psi_3{X}[n]{X}^*[(n\!-\!m)_K]{X}[(k_1\!-\!m)_K]}{{X}[k_1]K^2}], 
\end{align}
where $(\cdot)_K$ denotes the modulo-$K$ operation. Note that~\eqref{X_3} is the expectation of the sum of $K^2$ different terms (i.e., for the case of $(m,\! n)\!=\!(1,\!2)$, the corresponding term is $\frac{\psi_3{X}[2]{X}^*[1]{X}[k_1-1]}{{X}[k_1]K^2}$). We denote the corresponding terms for index $(m,n)$ as $P_{k_1}\!(m,n)$.
\begin{equation}
\label{eq.pmn}
P_{k_1}\!(m,n)=\frac{\psi_3{X}[n]{X}^*[(n\!-\!m)_K]{X}[(k_1\!-\!m)_K]}{{X}[k_1]K}.
\end{equation} 
To calculate~\eqref{X_3}, we classify the index $(m,n)$ as follows:
\begin{align}
\label{eq.subset}
&S_1^{k_1}=\{(m,n)|m=0,n=k_1\}, \ |S_1^{k_1}|=1,\nonumber \\
&S_2^{k_1}=\{(m,n)|m=0,n\neq k_1\}, |S_2^{k_1}|=K\!-\!1,\nonumber \\
&S_3^{k_1}=\{(m,n)|m\neq 0, n=k_1\}, |S_3^{k_1}|=K\!-\!1,\nonumber \\
%	&\mathbb{E}[\psi_3\frac{{X}[n]{X}^*[n]}{K^2}]=\frac{\psi_3}{K}.\nonumber \\
%	&S_3=\{(m,n)| m\!=\!1\!\sim\! K\!-\!1, n\!=\!k_1\} \nonumber \\
%	&\mathbb{E}[\psi_3\frac{{X}^*[k_1\!-\!m]{X}[k_1\!-\!m	]}{K^2}]=\frac{\psi_3(K-1)}{K^2}.\nonumber \\
&S_4^{k_1}=\{(m,n)|m\neq 0, n\!=\!(k_1\!-\!m)_K\}, |S_4^{k_1}|=K\!-\!1,\nonumber \\
%	&\mathbb{E}[\psi_3\frac{{X}^*[(k_1\!-\!2m)_K]{X}[(k_1\!-\!m)_K]{X}[(k_1\!-\!m)_K]}{{X}[k_1]K^2}]=0.\nonumber \\
&S_5^{k_1}=\{(m,n)|(m,n)\notin \{S_1^{k_1}\cup S_2^{k_1} \cup S_3^{k_1}\cup S_4^{k_1}\}\}, 
\end{align}
where $|S_1^{k_1}|\!\!=\!\!1, |S_2^{k_1}|\!\!=\!\!|S_3^{k_1}|\!\!=\!\!|S_4^{k_1}|\!\!=\!\!K\!-\!1$, and $|S_5^{k_1}|\!=\!K^2\!-\!3K\!-\!2$.
For the each subset, we calculate $\mathbb{E}[P_{k_1}\!(m,n)]$ as
\begin{align}
&\mathbb{E}[P_{k_1}\!(m,n)]\!\!=\!\!\begin{cases} &\!\!\!\!\!\!\mathbb{E}[\frac{\psi_3|{X}[k_1]|^2}{K}]=\frac{\psi_3 p_1}{K},  (\!m,n\!)\!\!\in\!\! S_1^{k_1}\\
&\!\!\!\!\!\!\mathbb{E}[\frac{\psi_3|{X}[n]|^2}{K^2}]=\frac{\psi_3 p_1}{K},  (\!m,n\!)\!\!\in\!\! S_2^{k_1}\\
&\!\!\!\!\!\!\mathbb{E}[\frac{\psi_3|{X}[(k_1\!-\!m)_K]|^2}{K^2}]=\frac{\psi_3 p_1}{K}, (\!m,n\!)\!\!\in\!\! S_3^{k_1}	\\
&\!\!\!\!\!\!\mathbb{E}[\frac{\psi_3{X}[(k_1\!-\!m)_K]^2\!{X}^*[(k_1\!-\!2m)_K]}{{X}[k_1]K}]\!\!=\!\!0,	(\!m,n\!)\!\!\in\!\! S_4^{k_1}\\
&\!\!\!\!\!\!\mathbb{E}[\frac{\psi_3\!{X}[n]\!{X}^*[(n\!-\!m)_K]\!{X}[(k_1\!-\!m)_K]}{{X}[k_1]K}]\!\!=\!\!0, (\!m,n\!)\!\!\in\!\! S_5^{k_1}.
\end{cases}
%	&\mathbb{E}[\psi_3\frac{{X}[n]{X}^*[(n\!-\!m)_K]{X}[(k_1\!-\!m)_K]}{{X}[k_1]K^2}]=0.\nonumber \\
\end{align}
Note that the real and imaginary part of ${X}[k]$ are independent discrete-uniform random variable with mean and variance are $0$ and $1/2$. We get~\eqref{eq.lem_n2} by tally up all the cases listed in~\eqref{eq.subset}.
\begin{align}
\label{eq.subsetsum}
\mathbb{E}[{N_{\text{PA}}}[k_1]]&=\sum_{i=1}^{5}|S_i|\mathbb{E}[P_{k_1}\!(m,n)|(m,n)\in S_i]\nonumber \\
&=\frac{\psi_3p_1}{K}+\frac{\psi_3p_1(K-1)}{K}+\frac{\psi_3p_1(K-1)}{K} \nonumber \\
&=\frac{\psi_3p_1(2K-1)}{K}. 
\end{align}

From~\eqref{X_3}, we get 
\begin{align}
\label{N_k1k2}
E[{N_\text{PA}}[k_1]&{N_\text{PA}}^*[k_2]]=E[\frac{\psi_3^2}{{X}[k_1]{X}^*[k_2]K^2}\\ \nonumber
\times\sum_{m_1=0}^{K-1}\{\sum_{n_1=0}^{K-1}{X}[_1]&{X}^*[(n_1-m_1)_K]{X}[(k_1-m_1)_K]\} \\ \nonumber
\times\sum_{m_2=0}^{K-1}\{\sum_{n_2=0}^{K-1}{X}^*[n_2]&{X}[(n_2-m_2)_K]X^*[(k_2-m_2)_K]\}].
\end{align} 

We can interpret~\eqref{N_k1k2} as an expectation of the summation of the $K^4$ combinations of $P_{k_1}\!(m_1,n_1)$, and $P_{k_2}\!(m_2,n_2)$. 
Let ${{\mathrm{I_2}}}$, and $\mathrm{I_2}$ denote the indices $(m_1\!,n_1)$, and $(m_2\!,n_2)$, respectively. The concatenation of the two indices, ${{\mathrm{I_2}}}$, and $\mathrm{I_2}$, is denoted by ${\mathrm{I}}$.
Similar to~\eqref{eq.subset}, we classify the $K^4$ different terms as follows:  
\begin{align}
\label{eq.subset_n1_diag}
&\text{When} \ \ k_1=k_2, \nonumber \\  
&T_1\!=\!\{{\mathrm{I}}|{{\mathrm{I_2}}},\mathrm{I_2}\in S_1^{k_1}\!\}, \nonumber \\
&T_2\!=\!\{{\mathrm{I}}|{{\mathrm{I_2}}}\in S_1^{k_1}, \mathrm{I_2}\in \{S_2^{k_1}\cup S_3^{k_1}\}\}, \nonumber \\
&T_3\!=\!\{{\mathrm{I}}|{{\mathrm{I_2}}}\in \{S_2^{k_1}\cup S_3\}, \mathrm{I_2}\in S_1^{k_1}\}, \nonumber \\
&T_4\!=\!\{{\mathrm{I}}|{{\mathrm{I_2}}}\in S_2^{k_1}, \mathrm{I_2}={{\mathrm{I_2}}}\!\}, \nonumber \\
&T_5\!=\!\{{\mathrm{I}}|{{\mathrm{I_2}}}\in S_2^{k_1}\!, \mathrm{I_2}\!\in\!\! S_3^{k_1}\!,m_2\!\equiv\!k_1\!-\!n_1\!\}, \nonumber \\
&T_6\!=\!\{{\mathrm{I}}|\mathrm{I_2}\in S_2^{k_1},{{\mathrm{I_2}}}\!\!\in\!\! S_3^{k_1}\!,m_1\!\equiv\!k_1\!-\!n_2\!\}, \nonumber \\
&T_7\!=\!\{{\mathrm{I}}|{{\mathrm{I_2}}}\!\in S_3^{k_1}, \mathrm{I_2}\!\in S_3^{k_1}, m_2\!=\!m_1\}, \nonumber \\
&T_8\!=\!\{{\mathrm{I}}|\!{{\mathrm{I_2}}},\mathrm{I_2}\!\in \!\{\!S_2^{k_1}\cup S_3\!\}\}\!-\!\!\!\bigcup_{i\in\{4,5,6,7\}}\!\!\!\!\!\!T_i, \nonumber \\
&T_9\!=\!\{{\mathrm{I}}|m_1\!\neq\!0,n_1\!=\!n_2\!\equiv\!k_1\!-\!m_1\!, m_2\!=\!m_1\!\}, \nonumber \\
&T_{10}\!=\!\{{\mathrm{I}}|{{\mathrm{I_2}}}=\mathrm{I_2}\in S_5^{k_1},  \}, \nonumber \\
&T_{11}\!=\!\{{\mathrm{I}}|{{\mathrm{I_2}}}\in S_5^{k_1}\!,  m_2\!\equiv\!k_1\!-\!n_2,n_2\!\equiv\!k_1\!-\!m_2\! \},
\end{align}
where
\begin{align}
&\mathbb{E}[P_{k_1}\!({{\mathrm{I_2}}})P_{k_2}\!^*(\mathrm{I_2})]\\
&=\!\!\begin{cases} \frac{\psi_3^2 p_2}{K^2},   \ \ \ \ {\mathrm{I}}\in T_1\\
\frac{\psi_3^2 p_1^2}{K^2},  \ \ \ \   {\mathrm{I}} \in  \{T_2\cup T_3\cup T_8\}\\
\frac{\psi_3^2 p_2}{K^2},  \ \ \ \ {\mathrm{I}} \in  \{T_4\cup T_5\cup T_6\cup T_7\}\\
\frac{\psi_3^2 p_1p_2p_3}{K^2},  \ \ \ \ {\mathrm{I}} \in  \{T_9\}\\
\frac{\psi_3^2 p_1^3p_3}{K^2}, \ \ \ \  {\mathrm{I}}\in \{T_{10}\cup T_{11}\}. \\
\end{cases}
\end{align}
Then we get $m_2$.
\begin{align}
\label{eq.subsetsum2}
m_2=&\mathbb{E}[{N_{\text{PA}}}[k_1]{N^*_{\text{PA}}}[k_1]] \nonumber\\
=&\sum_{i=1}^{11}|T_i|\mathbb{E}[P_{k_1}\!({{\mathrm{I_2}}})P_{k_2}^*\!(\mathrm{I_2})|{\mathrm{I}}\in T_i]\nonumber \\
=&\frac{\psi_{3}^2}{K^2}\left\{(4K^2\!-\!10K\!+\!6\!)p_1^2+\!(4K\!-\!3)p_2+(\!K\!-\!1\!)p_1p_2p_3\right.  \nonumber\\
&\left.+ 2(K^2\!\!-\!\!3K\!+\!2)p_1^3p_3\right\}. 
\end{align}
We derive (29) in the same way.
\begin{align}
\label{eq.subset_n1_nondiag}
&\text{When} \ \ k_1\neq k_2, \nonumber \\  
&U_1\!=\!\{{\mathrm{I}}|{{\mathrm{I_2}}}\in S_1^{k_1},\mathrm{I_2}\in S_1^{k_2}\}, \nonumber \\
&U_2\!=\!\{{\mathrm{I}}|{{\mathrm{I_2}}}\!\in \!S_1^{k_1}\!,m_2\equiv\!n_2\!-\!k_1,n_2\!\in\!\{k_1,k_2\}\!\}, \nonumber \\
&U_3\!=\!\{{\mathrm{I}}|m_1\!\equiv\!n_1\!-\!k_2,n_1\!\!\in\!\!\{k_1,k_2\}\!\},\mathrm{I_2}\in\! S_1^{k_2}\!\}, \nonumber \\
&U_4\!=\!\{{\mathrm{I}}|{{\mathrm{I_2}}}\!\!\in\!\! S_1^{k_1}\!, \mathrm{I_2}\in\!\!\{S_2^{k_2}\cup S_3^{k_2}\}\!\}\!-\!U_2, \nonumber \\
&U_5\!=\!\{{\mathrm{I}}|{{\mathrm{I_2}}}\!\!\in\!\!\{S_2^{k_1}\cup S_3^{k_1}\},\mathrm{I_2}\in\!\! S_1^{k_2} \!\}\!-\!U_3, \nonumber \\
&U_6\!=\!\{{\mathrm{I}}|{{\mathrm{I_2}}}\!\!\in\!\! S_2^{k_1},\mathrm{I_2}\in S_2^{k_2}, n_1\!=\!n_2 \!\}, \nonumber \\
&U_7\!=\!\{{\mathrm{I}}|{{\mathrm{I_2}}}\!\!\in\!\! S_2^{k_1},\mathrm{I_2}\in S_3^{k_2}, n_1\!\equiv\!k_2\!-\!m_2\!\}, \nonumber \\
&U_8\!=\!\{{\mathrm{I}}|{{\mathrm{I_2}}}\!\!\in\!\! S_3^{k_1},\mathrm{I_2}\in S_2^{k_2}, n_2\!\equiv\!k_1\!-\!m_1 \!\}, \nonumber \\
&U_9\!=\!\{{\mathrm{I}}|{{\mathrm{I_2}}}\!\!\in\!\! S_3^{k_1}\!,\mathrm{I_2}\in S_3^{k_2}\!, k_1\!-\!m_1\!\equiv\!k_2\!-\!m_2 \!\}, \nonumber \\
&U_{10}\!=\!\{{\mathrm{I}}|{{\mathrm{I_2}}}\!\!\in\!\!\{S_2^{k_1}\!\cup\! S_3^{k_1}\!\},\mathrm{I_2}\in\!\{S_2^{k_2}\!\cup\! S_3^{k_2}\}\!\} \nonumber \\& \ \ \ \ \ \ \ \ \ -\bigcup_{i\in\{6,7,8,9\}}\!\!\!\!\!\!U_i, \nonumber \\
&U_{11}\!=\!\{{\mathrm{I}}|{{\mathrm{I_2}}}\in\!\!S_5^{k_1},\mathrm{I_2}\in\!S_5^{k_2}\!, n_1\!-\!m_1\!\equiv\!k_2,\!n_2\!-\!m_2\!\equiv\!k_1\!\},
\end{align}
where
\begin{align}
&\mathbb{E}[P_{k_1}({{\mathrm{I_2}}})P^*_{k_2}\!(\mathrm{I_2})]\nonumber\\
&=\!\!\begin{cases} \frac{\psi_3^2 p_1^2}{K^2},   	\ \ \ \ \ {\mathrm{I}}\in \bigcup_{i\in\{1,4,5,10,11\}}U_i,\\
\frac{\psi_3^2 p_2}{K^2},  \ \ \ \ \ {\mathrm{I}} \in  \bigcup_{i\in\{2,3,6,7,8,9\}}U_i.
\end{cases}
\end{align}
We obtain $m_3$ as follows:
\begin{align}
\label{eq.subsetsum3}
&m_3=\mathbb{E}[{N_{\text{PA}}}[k_1]{N^*_{\text{PA}}}[k_2]]\nonumber\\
&=\sum_{i=1}^{10}|U_i|\mathbb{E}[P({{\mathrm{I_2}}})P(\mathrm{I_2})|{\mathrm{I}}\in U_i]\nonumber \\
&=\frac{\psi_{3}^2}{K^2}\left\{(4K^2-6K)p_1^2+4(K-1)p_2\right\}. 
\end{align}
\section{Proof of Theorem 2}
Let $\bold{G}$ be $\bold{N_{\text{PA}}}\circ \bold{H_{\text{SI}}}+\bold{N}$, where $\circ$ denotes the Hadamard product operation. Note that $P^{q,\text{PA}}_{H_{\text{eff}}}$ can be obtained by substituting $N$ in~\eqref{eq.residual_q2} with $G$. Hence, we express $P^{q,\text{PA}}_{H_{\text{eff}}}$ through modifying the last three terms in~\eqref{eq.residual_q2} (i.e., $c,d,e$) as follows:  
\begin{align}
\label{eq.49}
	P^{q,\text{PA}}_{H_{\text{eff}}} =& \frac{1}{K}{\mathrm{tr}}[\bold{E_{H_{{\mathrm{SI}}}H_{{\mathrm{SI}}}^*}}]\!-\!(a)\!+\!(b)\!+\!(c')\!+\!(d')\!-\!(e'), 
\end{align}
where
\begin{align}
\label{eq.cde_changed}
(c') &=\frac{1}{K}\text{tr}(\mathbb{E}[\bold{\Omega} \bold{R}^{-1}\bold{Q^*RQ}\bold{R}^{-1}\bold{\Omega}^*\bold{GG^*}]) \nonumber \\
(d') &=\frac{1}{K}\text{tr}(\mathbb{E}[\bold{\Omega} \bold{R}^{-1}\bold{Q^*RQ}\bold{R}^{-1}\bold{\Omega}^*(\bold{GH^*+HG^*})])\nonumber \\
(e') &=\frac{2}{K}\text{tr}(\mathbb{E}[\bold{\Omega} \bold{Q}\bold{R}^{-1}\bold{\Omega}^* \bold{G}\bold{H_{\text{SI}}}^*]).
\end{align} 
By using Lemma 1 and 2, we obtain
\begin{align}
\label{eq.Nsubst}
\mathbb{E}[\bold{GH^*_\text{SI}}]&=m_1\mathbb{E}[\bold{H_\text{SI}H^*_\text{SI}}], \nonumber \\
\mathbb{E}[\bold{GG^*}]=(m_2\!-\!m_3)&\bold{D(E_{H_{\mathrm{SI}}H_{\mathrm{SI}}^*})}+m_3\bold{E_{H_{\mathrm{SI}}H_{\mathrm{SI}}^*}}+\sigma^2 I,\nonumber\\
\end{align}
where $\bold{D(E_{H_{\mathrm{SI}}H_{\mathrm{SI}}^*})}$ denotes the diagonal matrix whose diagonal elements are equal to that of $\bold{E_{H_{\mathrm{SI}}H_{\mathrm{SI}}^*}}$.
Using~\eqref{eq.Nsubst}, we simplify~\eqref{eq.cde_changed} as follows:
\begin{align}
\label{eq.cde_changed2}
(c') &=(c)+m_3(b)\nonumber \\
&+\frac{m_2-m_3}{K}\left[P^2A_1^2K\text{tr}\left(\bold{\Omega} \bold{R}^{-1}\bold{\Omega}^*\bold{D(E_{H_{\mathrm{SI}}H_{\mathrm{SI}}^*})} \right)\right.\nonumber \\
& \ \left.+(A_2-P^2A_2^2)\text{tr}\left(\bold{\Omega} (\bold{R}^{-1})^2\bold{\Omega}^*\bold{D(E_{H_{\mathrm{SI}}H_{\mathrm{SI}}^*})}\right)\right] \nonumber \\
(d') &=2m_1(b)\nonumber\\
(e') &=m_1(a).
\end{align} 
Now we get Theorem 2 by substituting \eqref{eq.49} with \eqref{eq.cde_changed2}.

\section{Proof of Corollary 3}
By substituting $\bold{\Omega} \bold{R}^{-1} \bold{\Omega}^* \bold{E_{H_{{\mathrm{SI}}}H_{{\mathrm{SI}}}^*}}\!=\! \bold{E_{H_{{\mathrm{SI}}}H_{{\mathrm{SI}}}^*}}$, $\bold{R}=K\bold{I_M}$,  and $\sigma^2=0$, we can rewrite~\eqref{eq.cor3} as 
\begin{align}
\label{eq.cor3_proof}
&	\tilde{P}^{q,\text{PA}}_{H_{\text{eff}}} = \frac{1}{K}{\mathrm{tr}}[\bold{E_{H_{{\mathrm{SI}}}H_{{\mathrm{SI}}}^*}}]\nonumber \\
+&\frac{(1\!\!-\!\!2m_1\!+\!m_3)(PA_1)^2\!-\!2(m_1\!+\!1)\!PA_1}{K}{\mathrm{tr}}[\bold{E_{H_{\mathrm{SI}}H_{\mathrm{SI}}^*}}] \nonumber\\ 
+&\frac{\left(1\!\!-\!\!2m_1\!+\!m_3)(A_2-(PA_1)^2\right)}{K}{\mathrm{tr}}\left[\bold{E_{H_{\mathrm{SI}}H_{\mathrm{SI}}^*}}\right] \nonumber \\
+&\frac{A_2\left(m_2\!-\!m_3)\right)}{K^2}{\mathrm{tr}}\left[\bold{\Omega}\bold{\Omega}^* \bold{D(E_{H_{\mathrm{SI}}H_{\mathrm{SI}}^*})} \right].
\end{align}
We can easily get~\eqref{eq.cor3} from~\eqref{eq.cor3_proof} by using~\eqref{eq.cor2}.
\bibliographystyle{IEEEtran}
%\balance % BALANCING THE LAST PAGE
\bibliography{TWC}

%\enlargethispage{-5in}

% that's all folks
\end{document}